\definecolor{darkblue}{rgb}{0,0,0.4}
\definecolor{darkgreen}{rgb}{0,0.5,0}
\definecolor{darkred}{rgb}{0.5,0,0}
\def\by{\begin{color}{yellow}}
\def\ey{\end{color}}
\def\bg{\begin{color}{green}}
\def\eg{\end{color}}
\def\bR{\begin{color}{red}}  
\def\bB{\begin{color}{blue}}
\def\bM{\begin{color}{magenta}} 
\def\bC{\begin{color}{cyan}}
\def\bW{\begin{color}{white}} 
\def\bBl{\begin{color}{black}} 
\def\bG{\begin{color}{green}} 
\def\bY{\begin{color}{yellow}} 
\def\e{\end{color}}
\theoremstyle{definition}
\newtheorem{remark}{Remark}[section]
\newtheorem{rem}[remark]{Remark}
\newtheorem{example}[remark]{Example}
\theoremstyle{plain}
\newtheorem{thm}[remark]{Theorem}
\newtheorem{defn}[remark]{Definition}
	\newtheorem{lem}[remark]{Lemma}
	\newtheorem{prop}[remark]{Proposition}
\newcommand{\bit}{\begin{itemize}}
\newcommand{\eit}{\end{itemize}\par\noindent}
\newcommand{\ben}{\begin{enumerate}}
\newcommand{\een}{\end{enumerate}\par\noindent}
\newcommand{\beq}{\begin{equation}}
\newcommand{\eeq}{\end{equation}\par\noindent}
\newcommand{\beqa}{\begin{eqnarray*}}
\newcommand{\eeqa}{\end{eqnarray*}\par\noindent}
\newcommand{\beqn}{\begin{eqnarray}}
\newcommand{\eeqn}{\end{eqnarray}\par\noindent}
\def\C{{\bf C}}
\def\Cf{{\bf C}_{\rm free}}
\def\D{{\bf D}}
\def\St{{\rm S}}
\def\RC{{\rm PC}}
\def\UC{{\rm UC}}
\def\M{{\rm M}}
\def\FinProb{{\bf FinProb}}
\def\FinStoch{{\bf FinStoch}}
\def\Set{{\bf Set}}
\def\H{\mathcal{H}}
\def\blank{\underline{\phantom{x}}}
\newcommand{\lra}{\longrightarrow}
\title{A mathematical theory of resources}
\author[1]{Bob Coecke\thanks{Bob.Coecke@cs.ox.ac.uk}}
\author[2]{Tobias Fritz\thanks{tfritz@perimeterinstitute.ca}}
\author[2]{Robert W.~Spekkens\thanks{rspekkens@perimeterinstitute.ca}}
\affil[1]{University of Oxford, Department of Computer Science}
\affil[2]{Perimeter Institute for Theoretical Physics}
\begin{document}
\maketitle
\thispagestyle{empty}

\begin{abstract}
In many different fields of science, it is useful to characterize physical states and processes as resources.  Chemistry, thermodynamics, Shannon's theory of communication channels, and the theory of quantum entanglement are prominent examples. Questions addressed by a theory of resources include: Which resources can be converted into which other ones?  What is the rate at which arbitrarily many copies of one resource can be converted into arbitrarily many copies of another?  Can a catalyst help in making an impossible transformation possible? How does one quantify the resource? Here, we propose a general mathematical definition of what constitutes a resource theory.  We prove some general theorems about how resource theories can be constructed from theories of processes wherein there is a special class of processes that are implementable at no cost and which define the means by which the costly states and processes can be interconverted one to another. We outline how various existing resource theories fit into our framework. Our abstract characterization of resource theories is a first step in a larger project of identifying universal features and principles of resource theories.   In this vein, we identify a few general results concerning resource convertibility.
\end{abstract}

\tableofcontents

\section{Introduction}

In science, one can distinguish two traditions of theory-building.  On the one hand, there are the theories that seek to model, explain, and predict the natural behaviour of systems, in the absence of any human intervention and regardless of what anyone knows about them.   In physics, one could call this the \emph{dynamicist} tradition of theory building, where it is the most common approach.  On the other hand, there is the \emph{pragmatic} tradition wherein the goal is to describe the manner and extent to which a given system can be known and controlled by human agents. The more a scientific discipline is concerned with aspects close to human life and society, the more relevant this aspect is. The guiding philosophy in the pragmatic tradition is that understanding a phenomenon means being able to make use of it.   Physical phenomena are studied in order to better leverage certain \em resources.\em

Chemistry is a good example of a field that takes this pragmatic perspective.  Much of chemistry is about understanding chemicals as resources. 
This perspective can be traced back to the origin of the field, which was the study of alchemy, the goal of which was to find ways of transforming base metals such as iron, nickel, lead and zinc into noble metals such as silver and gold.  The pragmatic approach is still going strong today, particularly in modern industrial chemistry, which seeks to discover the processes by which raw materials that are available in abundance can be transformed into useful products, for instance, bulk chemicals such as gases, acids, bases, and petroleum, and secondary products such as dyes, pesticides, drugs, and polymers.   

A similar story is true of thermodynamics.  Early work, such as that of Carnot, sought to understand resources of thermal nonequilibrium in terms of their ability to do useful work, such as inducing mechanical nonequilibrium, and to determine the relative usefulness of different sorts of resources and different processes for extracting work from these.  This perspective still survives in modern treatments of the subject.

The basic questions which are answered in a theory of resources are: Which resources can be converted into which other ones? What are the ways in which a given conversion can be accomplished?

The first known rules of chemical transformations were determined empirically, as were the first constraints on thermodynamic transformations. Many resource theories start life in this fashion.  Eventually, however, scientists seek to provide a derivation of the rules of a resource theory from a deeper theory of the physics.  The rules of stoichiometry, for instance, can be inferred from Dalton's atomic theory, and the rules of thermodynamics can be inferred from statistical mechanics.  Typically such attempts at reconstruction generally lead to a better characterization of the resources and the possibilities of conversion.   For instance, after the development of the atomic theory of matter, the resources in chemistry were understood to be collections of molecules and the transformations among these were understood to be constrained by conservation of the constituent atomic elements.  

The resulting theories of resources incorporated many intuitive facts.  For instance, it is obvious that if one can convert resource $a$ to resource $b$ and one can also convert resource $b$ to resource $c$, then one can convert resource $a$ to resource $c$.  Another intuitive fact is that one can compose resources in parallel, and processes for implementing conversions among resources can be composed both in parallel and sequentially.  

The first goal of this article is to formalize these intuitive notions and develop a mathematical framework that captures precisely the assumptions of such resource theories.  We propose that a resource theory be associated with a symmetric monoidal category wherein the objects are resources and the morphisms are transformations of the resources.

This framework can then be studied as a mathematical entity in its own right.  One can also derive results in the abstract setting that can be fed back to all of the concrete resource theories.  Certain new questions can also be posed within the framework, e.g.~are there concepts, features and  general principles  that are common to all resource theories? 

As is often the case, by abstraction we arrive at a framework that can accommodate much more than the paradigmatic examples that led one to it.  

In the theory of computation, for instance, one is often interested to know whether a given set of gates, when combined in parallel and sequentially, can be used to build a circuit that allows the solution of computational problems in a given complexity class.  One can therefore think of gate sets as resources for achieving computational tasks.  As another example, in the theory of communication, one is interested to know whether certain kinds of noisy channels can be combined, together with pre- and post-processing, to simulate a noiseless channel, so that channels can be considered as resources for communication tasks.   Gates and channels, unlike chemicals and sources of heat, are transformations rather than states of matter.  Nonetheless, our abstract framework for resource theories can accommodate transformations as resources just as easily as it accomodates states.   Indeed, a given resource theory can include both resource transformations and resource states, as we shall see. 

Another sense in which the abstraction can accommodate a greater variety of examples is that the resources need not be considered to be \emph{physical} states or processes at all, they might instead be merely \emph{logical} entitites.  For instance, one can cast mathematics as a resource theory where the objects are mathematical propositions and the morphisms are proofs, understood as sequences of inference rules.   The tensor product is interpreted as logical conjunction, so that 
if $P$ and $Q$ are propositions, then $P\otimes Q$ corresponds to the proposition ``$P$ and $Q$''. Variants of this resource theory are studied in categorical logic and categorical proof theory; see e.g.~\cite{BHRU}.  The categorical formalism for resource theories that we propose will make a connection with this work. 

There is another schema by which many resource theories are defined.  For a given set of possible experimental interventions---for instance, state preparations, transformations and measurements---the set can be divided into those interventions that are considered to be freely implementable and those that are considered to be costly.    It is presumed that an agent can make use of anything in the free set in unlimited number and in any combination, while the elements of the expensive set are the resources. The theory seeks to describe the structure that is induced on the resources, given access to the free set. 

The best modern example of a resource theory arising in this fashion is entanglement theory.  The term `entanglement' was coined by Schr\"{o}dinger in the thirties, but the turning point in the theory of entanglement was in the mid-nineties, when researchers in quantum information realized that an entangled state was a resource.  One of the first examples of entanglement serving as a resource  was provided by the quantum teleportation protocol.  Suppose two agents, Alice and Bob, are restricted in the quantum operations that they can jointly perform.  Specifically, they cannot transmit quantum coherence between their labs, which is another way of saying that there is no quantum channel between them.  Nonetheless, it is assumed that there is a classical channel between them, and that \em within \em each of their labs, they have no trouble implementing any coherent quantum operation. This restricted set of operations is called the set of local operations and classical communication (LOCC). In the quantum teleportation protocol, a maximally entangled state is consumed, using LOCC, to simulate one use of a quantum channel~\cite{EntanglementResource}.  After the quantum information community had made the shift to thinking about entanglement as a resource, an entangled state was thereafter \em defined \em as a state that cannot be generated by LOCC.  

Many other properties of quantum states have been studied by quantum information theorists, with entanglement theory as the model. There are now resource theories of: 
\em asymmetry\em, where the resources are quantum states that break a symmetry and quantum operations that fail to be covariant with respect to a symmetry~\cite{gour2008resource,marvian2013theory,marvian2011pure,MarvianSpekkensNoether,marvian2012information,marvian2013modes};
\em nonuniformity\em, where the resources are quantum states that deviate from the maximally mixed state and quantum operations that deviate from those that are deterministic or inject uniform noise into the system \cite{PurityResource,nonequilibrium};  and \em athermality\em, where the resources are quantum states that deviate from the Gibbs form for a given temperature and quantum operations that deviate from those that are deterministic or inject thermal noise at a given temperature \cite{brandao2011resource}. Each of these will be discussed in this article.  The resource theory of asymmetry provides a framework for categorizing many constraints that arise from symmetries, such as selection rules in atomic physics, Noether's theorem~\cite{MarvianSpekkensNoether}, and the Wigner-Araki-Yanase theorem~\cite{ahmadi2013wigner,marvian2012information}.  The resource theories of athermality and nonuniformity provide a framework for describing many results in quantum thermodynamics, including the limits to work extraction~\cite{JWZGB,brandao2011resource,FundLimitsNature,SSP2}, Landauer's solution to the Maxwell demon problem~\cite{nonequilibrium}, and the status of the second law of thermodynamics~\cite{nonequilibrium,1ShotAtherm2}.  Note that there are strong parallels between the resource theories that arise in this way, and this provides yet another motivation for developing a general framework in which one can hope to distinguish features that are generic to all such resource theories and features that are particular to a given example.

This schema, once formalized, is quite generic and it too can be extended beyond the case of \em experimental \em processes to more abstract sorts of processes.
For instance, the resource theory of \em compass-and-straightedge constructions \em provides an example of this sort of phenomenon.  It has plane figures as its objects and compass-and-straightedge constructions between those as transformations. More precisely, a plane figure consists of a set of points, lines and circles in $\mathbb{R}^2$, and a transformation is a sequence of basic constructions, such as creating a new line through two existing points, creating a circle with a given center point and point on the radius, or dropping some of the existing points, lines, or circles.  In this context, certain plane figures become resources.  For example, a plane figure consisting of a circle together with a square of the same area is a valuable resource (Example~\ref{ex:catalyst}).  It cannot be created from a blank figure using compass-and-straightedge constructions, and with it, one can create many other figures. The question of interest is: which plane figures can be used to construct which others, given compass-and-straightedge constructions for free?

All such examples constitute theories that describe a set of processes and a distinguished subset thereof that are considered free.  The second goal of our article is to show how such theories---we will call them \em partitioned process theories\em---define resource theories in the sense of the abstract definition that we provide in the first part of the article.

Starting from a partitioned process theory, one can define resource theories of states, but also resource theories of generic processes, including states, transformations, and measurements.  Because transformations and measurements can admit of an input, one can combine them not only in parallel, but sequentially as well.  We shall discuss the difference between theories that permit only parallel-combinable resource processes, and those that allow them to be combined in any fashion.  Such theories of process transformations can be understood as a generalization, to arbitrary process theories, of the {\em quantum combs} framework~\cite{comb}. 

Finally, we come back to the abstract framework of resource theories, and we define a mathematical structure that captures only partial information about the resource theory, but information which is particularly significant.  

Of interest to us is the partial information that is required to answer some of the most basic questions about a resource theory: When are two resources equivalent under the free operations, in the sense that one can convert one to the other and vice-versa?  What are the necessary and sufficient conditions on two resources such that  the first can be converted to the second (possibly irreversibly) under the free operations?  How can we find measures of the quality of a resource, that is, functions from the resources to the reals which are nonincreasing under the transformations allowed by the resource theory?  What is the rate at which arbitrarily many copies of one resource can be converted to arbitrarily many copies of another, or equivalently, what is the average number of copies of a resource of one kind that is required to produce one unit of a resource of another kind?  Can a given resource conversion be made possible by the presence of a catalyst, i.e. a resource which must be present but is not consumed in the process?

The question of which \em particular \em free operation is used to achieve a given resource conversion is typically considered to be of secondary interest, and it is this information that we will dispense with in defining our more streamlined mathematical formalism.
The third goal of our article, therefore, is to provide a mathematical framework which provides just enough structure to specify the answers to such questions, and therefore to capture the ``core'' of a resource theory.  We refer to this minimal framework as a {\em theory of resource convertibility}.
The requisite mathematical formalism turns out to be that of a {\em commutative (equivalently, Abelian or symmetric) preordered monoid}.  

First, consider the preorder.  This is the order over resources that captures whether one resource can be converted to another or not.  
Recall that the first of the questions above called us to characterize the equivalence classes of resources under the free operations, and the second called us to find the {\em partial order} over these equivalence classes that is induced by the free operations.  But this is nothing more than a call to find the preorder over the resources induced by the free relation of convertibility.  Once the preorder is identified, one can define measures of the resource, or ``resource monotones'', as any map from the resources to the reals which respects the preorder.  

As an example, in the resource theory of bipartite entanglement, the preorder over pure bipartite states is determined by the eigenvalue spectrum of the reduced density operator of each such state.
One state is higher than another in the preorder if and only if the eigenvalue spectrum of the reduced density operator of the second state \em majorizes \em the eigenvalue spectrum of the reduced density operator of the first state~\cite{Nielsen}.  A given equivalence class of states contains all states for which the reduced density operator has a given eigenvalue spectrum (wherein the order of the eigenvalues is not significant).   Any Schur-convex function of the eigenvalue spectrum respects the majorization preorder and consequently defines an entanglement monotone.  The Shannon entropy, which is a Schur-convex function, defines the entanglement monotone known as the ``entropy of entanglement''~\cite{EntanglementResource}.

It should be noted that although it might appear that the highest goal of a resource theory should be to find useful {\em measures} of a resource, in fact any such measure is a rather crude way of expressing facts about the preorder.  The preorder structure of a resource theory is more fundamental.  Indeed, if the equivalence classes defined by the preorder do not form a total order, then having the valuation over the states for any single measure is insufficient for deducing the preorder over the states.  In particular, for two states that are not ordered relative to one another, one measure might assign a higher value to the first, while another might assign a higher value to the second.
As an example, although there have been many proposals of measures of entanglement for pure bipartite states, the majorization preorder over these states is more fundamental than any of these measures.    This point has been previously emphasized elsewhere~\cite{nonequilibrium}. 

Next, consider the monoid structure.  This is simply the binary operation that specifies the manner in which two resources compose in parallel.  
We here emphasize that the monoid structure of a resource theory can vary independently of the preorder structure.
To prove this point, we provide an example of two resource theories that are associated with the same preorder, but have different notions of parallel composition and hence that differ in their monoid structure.  

The monoid structure is what is relevant for answering the last two questions on our list, i.e. determining asymptotic rates of conversion and for establishing whether nontrivial catalysis is possible in the resource theory.  Determining the asymptotic rates of conversion among states in particular is traditionally one of the first topics to be studied in any given \emph{quantum} resource theory, under the name of \emph{distillation protocols}, for instance, for entanglement~\cite{EntDist}, asymmetry~\cite{gour2008resource,marvian2011pure}, and athermality~\cite{JWZGB,brandao2011resource}.  More generally, one seeks to determine the asymptotic rates of conversion among processes (including states as special cases).  The resource inequalities of quantum Shannon theory~\cite{DevetakResource1,DevetakResource2} are an example of such work.

The fact that different sorts of resources can parallel-compose differently, and in particular that not every resource has an extensive character, has practical significance.  Recognition of the latter fact is what ultimately resolved a puzzle concerning an apparent conflict between two results: on the one hand, Ref.~\cite{horodecki2002laws} seemed to establish that the asymptotic rate of conversion in any quantum resource theory of states was given by the regularization of the relative entropy distance of a state to the set of free states; on the other hand, Ref.~\cite{gour2008resource} had determined the asymptotic rate of conversion among asymmetric states and it was not given by the regularization of the relative entropy.  The resolution of the puzzle, described in Ref.~\cite{gour2009measuring}, was that the result of Ref.~\cite{horodecki2002laws} applied only to extensive resources, that is, resources where the relative entropy distance to the free states of $N$ copies of a state scaled linearly with $N$, while for the resource of asymmetry, the relative entropy  distance to the free states of $N$ copies scaled as $\log N$.  

A theory of resource convertibility is a particularly useful arena to try and identify some generic facts about resource theories, facts which might be more obvious in an abstract setting than in any concrete resource theory. As a small first example, we have derived a sufficient condition for the absence of catalysis in a resource theory.  A more substantive example 
would be to extend the results of Ref.~\cite{horodecki2002laws} by determining a general expression for the asymptotic rate of conversion of resources when the latter are not extensive in character.

\color{black}

\paragraph*{Acknowledgements.} 

We would like to thank the QPL referees as well as Manuel B\"arenz and Hugo Nava Kopp for detailed comments on the manuscript. Pictures have been produced with the TikZit package. 

Research at Perimeter Institute is supported by the Government of Canada through Industry Canada and by the Province of Ontario through the Ministry of Economic Development and Innovation. The first two authors have been supported by the John Templeton Foundation.

\section{Resource theories}

\subsection{What is a resource theory?}

We now outline the basic elements of a resource theory.   First, there may be many different kinds of resources, which we denote by $A, B, \ldots$.  In addition, there are conversions between resources, which we denote $f:A\to B, g:C\to D, \ldots$, much as in a chemical reaction.  A transformation $f:A\to B$ is a method for turning resource $A$ into resource $B$.
 Since there may be many ways of achieving such a conversion, we need to distinguish these by an additional label such as $f$. 
Transformations can be \em composed sequentially\em.  In particular, a transformation $f$ that converts resource $A$ to resource $B$ can be composed sequentially with a transformation $g$ that converts resource $B$ to resource $C$.  (This is only possible if the output of $f$ is the same as the input of $g$, in which case we say that the types match.)  The composite transformation $f \circ g$ converts resource $A$ to resource $C$.

Moreover, it should be possible to combine resources: if $A$ and $B$ are resources, then it is possible to regard $A$ and $B$ together as a composite resource, which we denote $A\otimes B$. In other words, the collection of all resources should be equipped with a binary operation ``$\otimes$''. If $f:A\to B$ and $g:C\to D$ are transformations, then there should similarly be a composite transformation $f\otimes g$ corresponding to executing $f$ and $g$ in parallel,
\[
f\otimes g \: : \: A\otimes C \lra B\otimes D.
\]
Finally, we assume the existence of a \em void resource \em denoted $I$, which can be added to any other resource without changing it.\footnote{For many physicists and mathematicians, the ``$\otimes$'' notation brings to mind Hilbert spaces, but in this context, it means simply a parallel composition of resources.  For instance, if one is interested in cooking ingredients as resources, then the set of ingredients consisting of a potato and a carrot is denoted ${\rm potato} \otimes {\rm carrot}$~\cite{CatsII}.}

All of these intuitive facts about resources, if formalized, can be summed up by saying that resources and transformations among resources are, respectively, objects and morphisms in a symmetric monoidal category (SMC). We do not repeat here the axioms defining the structures in an SMC, but instead refer the reader to the literature.  For instance, Ref.~\cite{Cats,CatsII} provides accessible expositions for a physics audience, while Ref.~\cite{Benabou} is among the original works.  Moreover, SMCs admit an intuitive graphical calculus, and much of this paper can be understood in terms of it.  In fact, there exist powerful theorems which establish that equational reasoning within an SMC is in one-to-one correspondence with deformation of diagrams \cite{JS,SelingerSurvey}. 

In the graphical calculus of an SMC, we represent an object $A$ by a wire labeled with that object:   
\[
\includegraphics{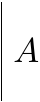}
\]
a composite object $A_1\otimes \dots \otimes A_n$ by placing such wires side-by-side:
\[
\includegraphics{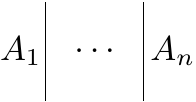}
\]
and a general morphism $f:A_1\otimes \dots \otimes A_n\to B_1\otimes \dots \otimes B_m$ by a box:
\[
\includegraphics{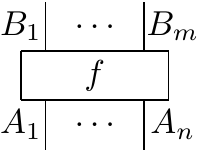}
\]
The trivial object is represented by `no wire', so a morphism $s: I \to A$ has no input wires.  Rather than as a box, we represent it as a triangle:
\[
\includegraphics{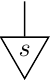}
\]
We typically omit the object labels whenever no danger of confusion arises. The cautious reader may note a similarity between the triangle depicting a state and Dirac notation in quantum theory, and indeed, this graphical calculus can be seen as Dirac notation, unfolded in two dimensions \cite{Cats, CatsII}. `Symmetry' in symmetric monoidal category stands for the fact that wires may cross:
\[
\includegraphics{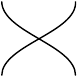}
\]

Sequential composition of $f:A\to B$ and $g:B\to C$ is represented by connecting $f$'s output to $g$'s input:
\[
\includegraphics{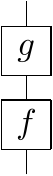}
\]
and parallel composition of $f_1:A_1\to B_1$ and $f_2:A_2\to B_2$ by placing boxes side-by-side:
\[
\includegraphics{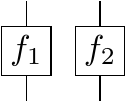}
\]
This completes the notation of the graphical calculus.  Equational reasoning boils down to nothing but deforming diagrams without changing the topology:
\[
\includegraphics{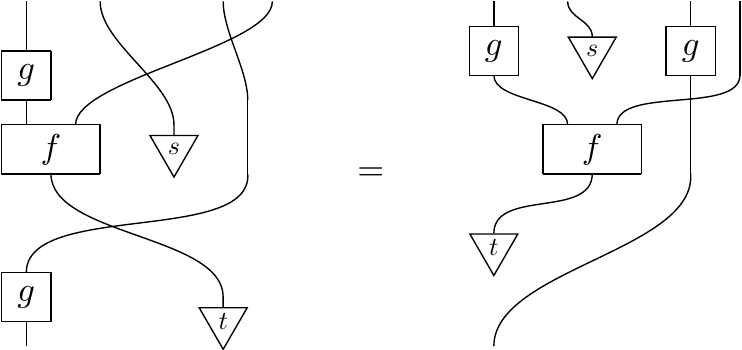}
\]
As already mentioned above, doing so allows one to establish any equation that can be derived from the axioms of an SMC.

We adopt the convention that if $\C$ is an SMC, then $|\C|$ denotes the set of its objects, and $\C(A,B)$ denotes the hom-set between $A$ and $B$, that is, the set of morphisms in $\C$ that have $A$ as domain and $B$ as codomain. 

To summarize then:

\begin{defn}
\label{def:resourcetheory}
A \emph{resource theory} is a symmetric monoidal category $(\D,\circ,\otimes,\mathbb{I})$, where:
\begin{itemize}
\item the objects $|\D|$ represent the resources, 
\item the morphisms in the hom-set $\D(A,B)$ for $A,B\in |\D|$ represent transformations of resource $A$ into resource $B$ that can be implemented without any cost,
\item the binary operation $\circ$ corresponds to sequential composition of processes with matching types,
\item the binary operation $\otimes$ describes parallel composition of resources and of processes, and 
\item the unit object $\mathbb{I}$ denotes the void resource, i.e., the resource which when composed with any other resource leaves it the same, as in $A\otimes \mathbb{I}=A=\mathbb{I}\otimes A$.
\end{itemize}
\end{defn}

The difference between the terms ``resource theory'' and ``symmetric monoidal category'' is not a mathematical one, but rather one of interpretational nature: a particular SMC is called a ``resource theory'' whenever we want to think of its objects as resources and its morphisms as transformations or conversions between those resources. So a resource theory, as an abstract entity, is just a symmetric monoidal category.  A concrete resource theory, such as chemistry or thermodynamics, is a particular interpretation of that symmetric monoidal category.  

Note that given the interpretation of the unit object $\mathbb{I}$ as the void resource, it necessarily has no cost.  Given that the morphisms of the resource theory are interpreted as transformations that can be implemented at no cost, it follows that any object $A$ that can be obtained from the void resource, i.e. for which the hom-set $\D(\mathbb{I},A)$ is non-empty, also has no cost.  The set of all such objects will be called the {\em free resources}.  The complement of this set will be the {\em costly} or {\em nonfree resources}.  
\begin{defn}
The set of {\em free resources} in $\D$ is $\{ A \in |\D | : \D(\mathbb{I},A) \ne \emptyset \}$.
\end{defn}

\begin{remark}
\label{rem:strict}
Throughout this paper, we assume that $(\D,\circ,\otimes,\mathbb{I})$ is a symmetric `strict' monoidal category, that is the unit and associativity natural isomorphisms are identities. By Mac Lane's strictification theorem~\cite[p.257]{MacLane}, any (non-strict) monoidal category is  equivalent (via a pair of strong monoidal functors) to a strict one. In particular, to show that a certain equation holds for general monoidal categories, it suffices to show this in the strict case.  Hence proofs in the graphical calculus carry over to arbitrary monoidal categories.  Concretely, in the graphical calculus, the associator becomes trivial since bracketing vanishes, and unitality becomes trivial since the tensor unit is represented by `no wire'. 

On the other hand, the symmetry isomorphisms $A\otimes B\stackrel{\cong}{\to} B\otimes A$ cannot be taken to be identities, and also in the graphical calculus, they need to be represented by the crossing of strands. One can either think of them as actual transformations which convert $A\otimes B$ into $B\otimes A$, or as ``passive'' transformations merely expressing the fact that $A\otimes B$ and $B\otimes A$ stand for the same objects, i.e.~the order in a tensor expression has no significance beyond syntax. 

Consult \cite{CatsII} for a discussion of all these issues.
\end{remark}

\subsection{Examples}

We now give a few examples of resource theories and how their structure can be formalized using a SMC.   We begin with our chemistry example.

\begin{example}
\label{ex:chem}
The resource theory of \em chemistry \em has collections of chemical species (atoms, ions, molecules) as its objects, and reactions under standard conditions as well as sequences of such reactions as transformations. An example of a simple reaction is the neutralization of an acid and a base,
\[
\mathrm{NaOH + HCl} \lra \mathrm{NaCl + H_2 O}.
\]
Here, we have used the usual notation in chemistry, where the ``$+$'' corresponds to our tensor ``$\otimes$''. The tensor unit $\mathbb{I}$ is simply the empty set of chemical species.  
An example from industrial chemistry of a transformation that requires a sequence of reactions is the widely used \em Haber process, \em 
\[
\mathrm{N_2 + 3 H_2} \lra \mathrm{2 NH_3},
\]
turning nitrogen and oxygen into ammonia, which can then be further processed into fertilizer.   It should be noted that different sequences of reactions relating the two sides of a reaction equation correspond to different morphisms in the category.

\end{example}

\begin{example}
\label{ex:rand}
The resource theory of randomness is defined as follows.
An object $(X,p)$ is a finite set equipped with a probability distribution $p$ which assigns to every $x\in X$ its probability $p(x)$. The transformations $f:(X,p)\to (Y,q)$ correspond to deterministic processings, which are deterministic maps $f:X\to Y$ having the property that $q(y) = \sum_{x\in f^{-1}(y)}p(x)$, meaning that one can compute the probability of getting a certain $y\in Y$ after the processing by summing up the individual probability of all $x\in X$ which process to $y$.  These maps compose sequentially in the obvious way. This defines a category $\FinProb$, which has previously been studied in the context of entropy~\cite{BFL}. To get from $\FinProb$ to an SMC that describes the resource theory of randomness, we must formalize the notions of parallel composition and of the trivial resource.
Parallel composition of objects is given by taking product distributions:
\[
(X,p) \otimes (Y,q) := (X\times Y, p\times q),
\]
where  $X\times Y$ is the Cartesian product of the sets $X$ and $Y$, and $p\times q$ denotes the product distribution on $X\times Y$. Parallel composition of deterministic processings is defined in the obvious way: when we have $f:(X,p)\to (X',p')$ and $g:(Y,q)\to (Y',q')$, then it is straightforward to check that
\[
f \times g \: : \: (X\times Y, p\times q) \to (X'\times Y', p'\times q') ,\qquad (x,y) \mapsto (f(x), g(y))
\]
is also a deterministic processing, and this is how we define the parallel composition $f\otimes g : (X,p)\otimes (Y,q) \to (X',p')\otimes (Y',q')$. 

Finally, the trivial resource $\mathbb{I}$ is taken to be the singleton set equipped with the point distribution.
The SMC that is obtained by augmenting $\FinProb$ with this tensor and tensor unit
$(\FinProb,\circ,\otimes,\mathbb{I})$
 defines the resource theory of randomness.

This theory has many practical applications.  For example, randomness is a valuable resource for secure cryptography. 
In many practical schemes for IT security, a computer gathers ``entropy'' from user input like keyboard strokes or mouse movement. Having ``bad'' randomness, in the sense of predictability, may lead to critical security issues. For this reason, true randomness is an important resource, and our definitions allow for a precise mathematical treatment of the question of the quality of randomness.  Our formalism also allows for a treatment of the manipulation of randomness. For instance, in the theory of  \em randomness extractors \em  one wants to find a deterministic transformation $f$ which turns a given $(X,p)$ with non-uniform $p$ into a $(Y,q)$ with uniform distribution $q$ (for this to be possible, the cardinality of $Y$ must be less than that of $X$).  Finding randomness extractors can therefore be understood as a problem in the resource theory of randomness. 

\end{example}

Lest the reader get the impression that the objects in a resource theory are always {\em states}, we present an example wherein the resources are themselves processes which can be interconverted one to the other by pre- and post-processing.  (Resource theories containing processes among the resources will be an important theme in subsequent sections.)

\begin{example}
\label{ex:shannon}
The resource theory of \em one-way classical communication channels \em concerns the mathematical theory of communication as developed by Shannon~\cite{Shannon}. In the associated SMC, an object is a triple $(A,B,P)$, where $A$ and $B$ are finite sets, and $P$ is a conditional probability distribution over the second set given the first, with $P(b|a)$ denoting the probability of $b$ given $a$ for all $a\in A$ and $b\in B$.
We think of the ingoing value $a$ as a message that Alice wants to send to Bob, who receives $b$. The transmission of the message may not be perfect, and hence Bob's $b$ may differ from Alice's $a$; more precisely, for every value of $a$ there is a certain probability of getting each possible value of $b$, and this is described by the given conditional distribution $P(b|a)$. Parallel composition of objects $(A,B,P)$ and $(C,D,P')$ corresponds to forming products of stochastic maps,
\[
(P\otimes P')(bd|ac) := P(b|a) P'(d|c),
\]
for all $a\in A$, $b\in B$, $c\in C$, and $d\in D$.
If Alice and Bob have access to the communication channel $P(b|a)$, they can use this channel to simulate certain other channels $Q(b'|a')$ by Alice applying a stochastic map $E(a|a')$ to the input of the channel, termed an \em encoding \em of $a'$, and Bob applying a stochastic map  $D(b'|b)$ to its output, termed a \em decoding \em of $b'$.  In this way, the channel resource is transformed as follows:
\[
P(b|a) \to Q(b'|a')
\]
where
\[
Q(b'|a'):= \sum_{a,b} D(b'|b) P(b|a) E(a|a'). 
\]

We also allow shared randomness as a free resource, that is, Alice and Bob are allowed to possess a pair of systems in an arbitrary correlated state.
Using this free resource, Alice and Bob can correlate the encoding and decoding maps and thereby transform the channel as follows:
\[
P(b|a) \to Q(b'|a'):= \sum_{a,b,x,y} D(b'|b,y) P(b|a) E(a|a',x) R(x,y), 
\]
where $x$ and $y$ denote the correlated variables in Alice's and Bob's possessions respectively, and $R(x,y)$ is the joint distribution describing the correlation. 

These are the morphisms in the SMC defining the resource theory, which compose sequentially and in parallel in the obvious way. Finally, the unit object  $\mathbb{I}$ in the SMC is the deterministic map from the singleton set to the singleton set, representing the channel which ``does nothing to nothing''.

The principal goal of Shannon's information theory is to find an encoding and a decoding such that the composite channel $Q(b'|a')$ is as close as possible to an identity channel $Q(b'|a') = \delta_{b',a'}$ (for that to be possible, the cardinality of the respective domain of $a'$ and of $b'$ must again be generally less than that of $a$ and of $b$). In this way, Alice and Bob can simulate a noiseless transmission of information.

\end{example}

\section{Resource theories from partitioned process theories}\label{sec:ProcTheor} 

\subsection{Partitioned process theories}

In recent years, SMCs have proven to be a 
convenient mathematical framework for abstractly describing theories of physical processes~\cite{Cats,CatsII} as well as more general kinds of processes~\cite{BaezLNP, Gospel}.
Above we interpreted the objects of an
SMC as resources and the morphisms as transformations thereof.  \em
Process theories \em are SMCs as well, although the interpretation is now
different.  The objects in the SMC, $|\C|$, now correspond to the
different types of systems in the process theory.  The morphisms
correspond to the different processes (including state preparations
and measurements in the case of physical processes); in particular,
the processes that have a system $A$ as input and system $B$ as
output correspond to the elements of the hom-set $\C(A,B)$.
 The notion of executing physical processes in series is represented by composition of the morphisms, $\circ$, while the notion of parallel composition of systems and physical processes is given by the tensor product $\otimes$ in the SMC.  Finally, the trivial system, i.e. ``nothing'', corresponds to the unit object $I$ in the SMC.  A couple of examples serve to illustrate the concept of a theory of physical processes.

\begin{example}\label{ex:classicalstochasticprocesses}
The theory of classical stochastic processes for systems with discrete state spaces is defined as the SMC  in which the objects are finite sets,
$A,B,\dots$, and the morphisms are stochastic maps between these, that is, the hom-set $\C(A,B)$ is the set of conditional probabilities $P(b|a)$ for all $a\in A$ and $b\in A$, i.e.~arrays of numbers $P(b|a)$ indexed by $a\in A$ and $b\in B$ such that $P(b|a) \ge 0$ and $\sum_b P(b|a)=1$ for all $a$. Sequential composition of stochastic maps is essentially matrix multiplication,
\[
(Q\circ P)(c|a) = \sum_b Q(c|b) P(b|a),
\]
and in this way one obtains a category which is sometimes called $\FinStoch$~\cite{BF}. To turn this category into an SMC, we augment it by a notion of parallel composition and a unit object~\cite{QClChannels}.
Parallel composition of objects is given by the Cartesian product of the finite sets, i.e. $A \otimes B := A \times B$, and parallel composition of stochastic maps is given by their entrywise product:
\[
(P \otimes Q)(bb'|aa') = P(b|a) Q(b'|a').
\]
The unit object of the process theory is the singleton set denoted $I$.  Therefore the SMC for the theory of classical stochastic processes is $(\FinStoch,\circ,\otimes,I)$.

While {\bf FinStoch} and {\bf FinProb} (from example~\ref{ex:rand}) are obviously related,
there are some clear differences.  While in {\bf FinProb}
probability distributions are encoded as objects, in {\bf FinStoch}
they are encoded as the morphisms.  This clearly illustrates the
difference between an SMC representing a resource theory and one
representing a process theory.
\end{example}

\begin{example}\label{ex:quantumprocesses}
The theory of quantum processes on finite-dimensional system is defined as the SMC in which objects are finite-dimensional Hilbert spaces, morphisms are completely positive maps which compose sequentially in the obvious way, parallel composition is given by the tensor product of Hilbert spaces, and the unit object is the 1-dimensional Hilbert space $\mathbb{C}$.
 \end{example}

 
In the examples that are going to follow, the overarching SMC $(\C,\circ,\otimes,I)$ defining the process theory is going to be a variant of either the SMC of classical stochastic processes or the SMC of quantum processes, in the sense of being equivalent to either of these two as a symmetric monoidal category, that is, these variants simply introduce some additional structure, which when forgotten, leaves us with the original process theory.

\begin{defn}
\label{def:processtheory}
A \em partitioned process theory \em 
consists of a process theory, described by an SMC $(\C,\circ,\otimes,I)$, and a distinguished subtheory of \em free processes\em, described by an SMC $\Cf$ that is an all-object-including sub-SMC of $\C$: 
\[
\Cf \hookrightarrow \C\ . 
\]
We will denote a given partitioned process theory by $(\C,\Cf)$.
\end{defn}

Some explanation is in order. The ``free processes'' forming $\Cf$ are assumed to be those which our agent can execute at no cost.
 If $f$ and $g$ are free processes, then clearly $f\otimes g$ should also be a free process, since $f$ and $g$ can in particular be executed in parallel. Likewise, if $f\circ g$ is defined, then it should also be a free process. This justifies the assumption that $\Cf$ is a sub-SMC of $\C$. Since doing nothing to a system should certainly be a free process, it follows that the identity map on any system should be included in $\Cf$. In other words, $\Cf$ should be a subcategory of $\C$ that includes all objects of $\C$. In concrete applications, it is frequently the case that $\Cf$ is not directly given as an all-object-including sub-SMC, but in terms of a ``generating set'' of processes which one declares to be free, and then one has to close this set under parallel and sequential composition in order to obtain the smallest all-object-including sub-SMC that contains it, and this is then what $\Cf$ ends up being in such a case.  The set of processes that are \em not \em in the free set, that is, $\C\setminus \Cf$, are the costly resources.  Obviously, in order to obtain an interesting resource theory, it should be the case that $\Cf\neq\C$. 

The requirement $\Cf\neq\C$ is a constraint that may not be straightforward to verify given a specification of $\Cf$ in terms of a generating set of free operations.  For instance, in the context of the theory of quantum processes, consider the following question: if we allow all unitaries to be included in $\Cf$, as well as all partial trace operations, then what states can we include in addition and still obtain a nontrivial resource theory?  It turns out that if we allow for processes to be simulated only approximately (see comments about epsilonification in Sec.~\ref{closing}), then there is only a single type of state that one can add without making the closure of the free processes into the full set of processes, namely, completely mixed states~\cite{PurityResource,nonequilibrium}. One thereby obtains the resource theory of nonuniformity, Example~\ref{examplenonuniformity}.  If we add \emph{any other state} to the set of free processes, then after closing the set under sequential and parallel composition, we make it possible to implement an approximation of every quantum process for free, i.e. we obtain the trivial theory wherein $\Cf=\C$.

Another point to note is that although one may be able to mathematically define many distinguished subtheories of a given process theory, the result may not have any physical interest.  In order to be physically interesting, it should be the case that one can identify a set of constraints on experimental interventions which limit the processes that can be implemented in unlimited number to all and only those included in $\Cf$.

We turn now to showing how a partitioned process theory, as in Definition~\ref{def:processtheory}, can be proven to define many different resource theories, in the sense of Definition~\ref{def:resourcetheory}.  In particular, we get different resource theories by considering different types of processes.

\subsection{Resource theories of states}

We will begin by outlining how a partitioned process theory $(\C,\Cf)$ gives rise to a resource theory of states. By definition, a \emph{state} is a process whose input is the trivial object; in other words, ``state'' is a different term for ``preparation procedure''. For our partitioned process theory, the set of states is $\bigcup_{A\in|\C|}\!\!\C(I,A)$.  
 We obtain a resource theory by considering 
  the extent to which these states can be converted one to another by the free transformations, that is, 
the structure of $\bigcup_{A\in|\C|}\!\!\C(I,A)$ under $\bigcup_{A, B\in|\C|}\!\!\Cf(A, B)$.

More precisely, we define a resource theory of states in terms of a partitioned process theory $(\C,\Cf)$ as follows. The category representing the resource theory of states that we obtain from $(\C,\Cf)$ will be denoted $\St (\C,\Cf)$.  
The objects of $\St (\C,\Cf)$ are taken to be
 the states of $\C$,
\[
|\St (\C,\Cf)|:= \!\!\bigcup_{A\in|\C|}\!\!\C(I,A).
\] 
A state $s:I\to A$ can be converted into another state $t:I\to B$ by a free transformation $\xi:A\to B$ if one has
\beq
\begin{split}
\label{eq:stateconvert} 
\includegraphics{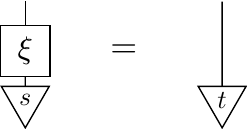}
\end{split}
\eeq
We then define the hom-set $\St (\C,\Cf)(s,t)$ for $s,t \in |\St (\C,\Cf)|$ to be the set of such free transformations that achieve $s\to t$.

We now define two transformations to be equivalent if they have the same effect on all states, including when acting on states of a composite system of which the transformation only acts on one part.  The morphisms in the resource theory of states are the equivalence classes of free transformations.  Note, that for all cases listed in Table 1, however, any two distinct free transformations are inequivalent.~\footnote{Nonetheless, it is useful to explicitly define this equivalence relation to make it clear that $\St(\C,\Cf)$ is  a subcategory of the resource theory of parallel-composable processes, $\RC(\C,\Cf)$, which we consider in the next subsection.}

So the hom-set $\St (\C,\Cf)(s,t)$ for $s,t \in |\St (\C,\Cf)|$ is the set of equivalence classes of such free processes that achieves $s\to t$, that is, 
\[
 \St (\C,\Cf)(s,t) := \{ \xi \in \Cf(A, B): \xi \circ s = t\} /\sim .
\]

Sequential composition in this category is defined as follows: if 
$\xi\in\Cf(A,B)$
 is a free process turning $s$ into $t$, and 
$\eta\in\Cf(B,C)$ 
is another free process turning $t$ into a third state $u:I\to C$, then $\eta\circ\xi \in \Cf(A,C)$ is a free process turning $s$ into $u$,
\[
\includegraphics{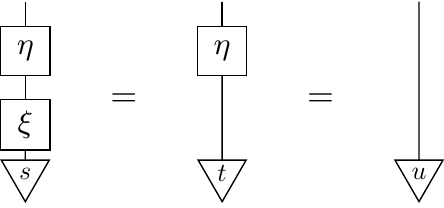}
\]
It is straightforward to check that this respects the equivalence, and moreover that the resulting composition on equivalence classes
turns 
$\St (\C,\Cf)$
 into a category. In order for it to be a symmetric monoidal category, we also need to define parallel composition of objects and morphisms. On objects in  $\St (\C,\Cf)$, which are states in $\C$, parallel composition is simply inherited from $\C$. Morphisms in 
 $\St (\C,\Cf)$,
which are equivalence classes of transformations between states in $\C$ using free processes, can also be composed in parallel in the obvious way,
\[
\includegraphics{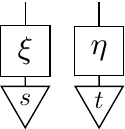}
\]
where we use the fact that $\Cf$ is closed under parallel composition. Compatibility with the equivalence relation is again easy to show. Since we assume $\C$ to be a strict monoidal category, it follows that 
$\St (\C,\Cf)$
 is a strict monoidal category as well. Finally, the symmetry isomorphism $s\otimes t \to t\otimes s$ on objects in $\St (\C,\Cf)$ is given by the symmetry in $\C$,
\[
\includegraphics{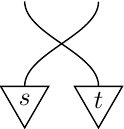}
\]
This symmetry must be a free process since $\Cf$ was assumed to be an all-object-including symmetric monoidal subcategory of $\C$, which entails in particular that it inherits the symmetries from $\C$. 

Finally, the identity object $\mathbb{I}$ for the SMC $\St (\C,\Cf)$ is the tensor unit of $\C$, $1_I$.

All told, we have proven the following:
\begin{thm}
\label{thm:statetorec}
For any partitioned process theory $(\C,\Cf)$, the procedure outlined above allows one to define a symmetric monoidal category $\St (\C,\Cf)$ that can be interpreted as a resource theory in the sense of Definition~\ref{def:resourcetheory}.
\end{thm}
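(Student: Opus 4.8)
The plan is to verify that the data specified in the construction above --- objects (the states of $\C$), hom-sets of $\sim$-equivalence classes of free transformations, sequential and parallel composition, the symmetry, and the unit $1_I$ --- satisfy all the axioms of a symmetric monoidal category. Once this is done, the final clause is immediate: by Definition~\ref{def:resourcetheory} a resource theory \emph{is} nothing but an SMC, so no additional argument is needed to ``interpret'' $\St(\C,\Cf)$ as one. The guiding observation throughout is that every piece of structure on $\St(\C,\Cf)$ is lifted from the ambient SMC $\C$, so each axiom should follow from the corresponding axiom in $\C$, \emph{provided} the construction descends correctly to $\sim$-equivalence classes.

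Concretely I would proceed in three blocks. (i) \emph{Category axioms.} The identity on a state $s:I\to A$ is the class $[1_A]$, a legitimate morphism $s\to s$ because $1_A\in\Cf$ and $1_A\circ s=s$; the unit laws $[\xi]\circ[1_A]=[\xi]=[1_B]\circ[\xi]$ and associativity of $\circ$ then follow termwise from the identity and associativity laws of $\C$, after checking that the composites land in the correct hom-sets (note that $\xi\sim\xi'$ with environment $E=I$ forces $\xi\circ s=\xi'\circ s$, so the target state is well-defined). (ii) \emph{Monoidal axioms.} Strict associativity and strict unitality of $\otimes$ are inherited from the strictness of $\C$ assumed in Remark~\ref{rem:strict}; functoriality of $\otimes$ on objects and the interchange identity $(\eta\otimes\eta')\circ(\xi\otimes\xi')=(\eta\circ\xi)\otimes(\eta'\circ\xi')$ hold because they hold for the chosen representatives in $\C$. (iii) \emph{Symmetry.} I take $\sigma_{s,t}:=[\sigma_{A,B}]$, the class of the symmetry of $\C$; it is free because $\Cf$ is an all-object-including symmetric sub-SMC, and it is a valid morphism $s\otimes t\to t\otimes s$ since naturality of $\sigma$ in $\C$ together with $\sigma_{I,I}=1_I$ gives $\sigma_{A,B}\circ(s\otimes t)=(t\otimes s)\circ\sigma_{I,I}=t\otimes s$. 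Naturality of $\sigma$, the hexagon coherence conditions, and $\sigma_{t,s}\circ\sigma_{s,t}=1_{s\otimes t}$ then descend from the corresponding facts in $\C$.

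The one step that genuinely requires care --- and where the precise form of the equivalence relation earns its keep --- is \emph{well-definedness of composition on equivalence classes}. Recall that $\xi\sim\xi'$ means they act identically on every state even when applied to only part of a composite, i.e.\ $(\xi\otimes 1_E)\circ r=(\xi'\otimes 1_E)\circ r$ for every environment $E$ and every state $r:I\to A\otimes E$. For sequential composition, $\xi\sim\xi'$ and $\eta\sim\eta'$ yield $\eta\circ\xi\sim\eta'\circ\xi'$ by substituting one representative at a time inside $((\eta\circ\xi)\otimes 1_E)\circ r=(\eta\otimes 1_E)\circ(\xi\otimes 1_E)\circ r$. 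For parallel composition one must show $\xi\otimes\eta\sim\xi'\otimes\eta'$; the trick is to factor $\xi\otimes\eta\otimes 1_E=(\xi\otimes 1_{B'\otimes E})\circ(1_A\otimes\eta\otimes 1_E)$ and replace first $\eta$ by $\eta'$ and then $\xi$ by $\xi'$, each time rewiring with the symmetry of $\C$ so that the factor being replaced occupies the ``acted-on'' slot while the remaining tensor factors play the role of the environment. It is exactly the ``acting on part of a composite'' clause in the definition of $\sim$ that makes these substitutions legal; this is the crux of the argument, and once it is established the remaining verifications are the routine transport of $\C$'s axioms through the quotient.
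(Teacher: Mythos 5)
Your proof is correct and takes essentially the same route as the paper, which presents the construction itself (objects as states, morphisms as $\sim$-equivalence classes of free transformations, composition, symmetry, and unit all lifted from $\C$) as the proof, declaring the compatibility of composition with the equivalence relation ``straightforward to check.'' Your only addition is to carry out those checks explicitly---notably the symmetry-rewiring argument showing that $\sim$ is a congruence for $\otimes$---which is precisely the detail the paper leaves to the reader, and your verification of it is sound.
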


Mathematically, 
$\St (\C,\Cf)$ 
is the \em coslice category \em of $\C$ over the unit object $I$, with the additional restriction that only processes in $\Cf$ are allowed for turning one state into another.

We begin by showing how the resource theory of randomness, Example~\ref{ex:rand}, arises from a partitioned process theory.
\begin{example}
\label{ex:rand2}
The process theory $(\C,\circ,\otimes, I)$ from which the resource theory of randomness arises is the theory of classical stochastic processes, defined in Example~\ref{ex:classicalstochasticprocesses}.

The distinguished subtheory of free processes $\Cf$ is the subcategory of classical deterministic processes.  These can be described as the subset of stochastic maps for which all the conditional probabilities are 0 or 1.  For any pair of objects $X$ and $Y$ in $\C$, we can define the deterministic maps from $X$ to $Y$.  Clearly, the property of being deterministic is preserved under parallel and sequential composition, and all identity maps are deterministic.  We have therefore confirmed that the deterministic maps form an all-objects-including subcategory of $\C$.

As usual in process theories, states on $X$ are simply a particular kind of stochastic map, the map from the singleton set to $X$.  It follows that the free states are simply the point distributions on $X$.  Every state that is not a point distribution, i.e. every distribution containing some randomness, is a nonfree resource.  

If we denote the SMC for the resource theory of randomness of Example~\ref{ex:rand} by $(\D,\circ,\otimes,\mathbb{I})$, then what we have shown is that for $\C$ the SMC of classical stochastic processes and $\Cf$ the sub-SMC of classical deterministic processes, the resource states $|\St (\C,\Cf)|$ can be identified with the objects of $\D$, and the transformations in the hom-set $\St (\C,\Cf)(s,t)$ for $s,t\in |\St (\C,\Cf)|$ are the transformations in the hom-set $D(s,t)$.  The unit object of the resource theory $\mathbb{I}$ is $1_I$, the identity map on the unit object of the process theory, which is the identity map applied to the singleton set.

\end{example}

\newcommand{\tstrut}{\rule{0pt}{2.6ex}}
\newcommand{\bstrut}{\rule[-1.1ex]{0pt}{0pt}}
\begin{table} 
\label{tab:processrecs}
\begin{center}
\begin{tabular}{|c|c|c|} 
 resource & systems & free processes  \\ \hline \hline
\multirow{ 2}{*}{bipartite entanglement~\cite{EntanglementResource}} & pairs of & \multirow{2}{*}{bipartite LOCC operations} \tstrut\\
 & Hilbert spaces & \bstrut\\\hline
\multirow{2}{*}{$n$-partite entanglement} & $n$-tuples of & \multirow{2}{*}{$n$-partite LOCC operations} \tstrut\\
 & Hilbert spaces & \bstrut\\ \hline
asymmetry~\cite{gour2008resource,marvian2013theory} & pairs of a Hilbert space & \multirow{2}{*}{$G$-covariant operations} \tstrut\\
 (relative to a symmetry group $G$) & \& a unitary rep'n of $G$ &  \bstrut\\ \hline
\multirow{2}{*}{nonuniformity~\cite{PurityResource,nonequilibrium}} & \multirow{2}{*}{Hilbert spaces} & \multirow{2}{*}{noisy operations} \tstrut\\
 & &  \bstrut\\ \hline
 athermality~\cite{JWZGB,brandao2011resource,FundLimitsNature} & pairs of a Hilbert space & 
 \multirow{2}{*}{$T$-thermal operations} \tstrut\\
(relative to temperature $T$) &  \& a Hamiltonian &  \bstrut\\ \hline
\end{tabular}
\end{center}
\caption{The resource theories of processes which have been most studied so far in quantum information theory. 
 In all cases, the 
 states and processes are all quantum states and quantum channels, i.e.~completely positive maps.}
\end{table}

In the following, we provide some examples of {\em quantum} resource theories of states that have been derived from partitioned process theories.
See Table~\ref{tab:processrecs} for a summary.

\begin{example}\label{examplenonuniformity}
The quantum resource theory of \em nonuniformity\em~\cite{PurityResource,nonequilibrium} is defined in terms of the following partitioned process theory.  The enveloping process theory is the SMC of quantum processes of Example \ref{ex:quantumprocesses}, and the free processes consist of the sub-SMC which is generated by three kinds of processes: first, preparing any system in the completely mixed state; second, applying any unitary transformation to a system; third, discarding any system by taking a partial trace. These free processes were called \em noisy operations \em in \cite{PurityResource}, (although this terminology seems a bit unfortunate since introducing noise is neither necessary nor sufficient for satisfying the required condition). Equivalently, we can characterize the free processes as those that have a Stinespring dilation for which the ancilla state is completely mixed. It follows from the definition of the free processes that a system with Hilbert space $\H$ has only a single free state, namely, the completely mixed state on $\H$.  Any state which is not completely mixed on $\H$---in other words, any state that is \em not uniformly mixed\em---is a nonfree resource.
\end{example}

\begin{example}
\label{ex:bientangle}
The most prominent example of a resource theory in the field of quantum information 
is the resource theory of \em bipartite entanglement\em. The enveloping process theory is a variant of the theory of quantum processes which contains some additional structure that is used to pick out the distinguished subtheory. The systems are pairs of finite-dimensional Hilbert spaces $(\H_A,\H_B)$ describing the systems owned by Alice and Bob, respectively.  The processes of type $(\H_A,\H_B)\lra(\H'_A,\H'_B)$ are quantum channels $\mathcal{L}(\H_A\otimes \H_B) \lra \mathcal{L}(\H'_A\otimes\H'_B)$, i.e.~completely positive trace-preserving maps turning an operator on $\H_A\otimes\H_B$ into an operator on $\H'_A\otimes\H'_B$. Sequential composition of such processes is the usual one. Parallel composition of these systems is component-wise,
\[
(\H_A,\H_B)\otimes (\H'_A,\H'_B) := (\H_A\otimes \H'_A,\H_B\otimes\H'_B),
\]
and similarly for the processes. The unit system is $I=(\mathbb{C},\mathbb{C})$. This describes the SMC $(\C,\circ,\otimes,I)$ of the process theory. As far as this category is concerned, the splitting into Alice's and Bob's Hilbert space is irrelevant: the definition of processes and their composition only involves the product $\H_A\otimes\H_B$ rather than $\H_A$ and $\H_B$ individually. 
The distinction between A and B is only relevant for defining the set of free processes, in the precise sense that the SMC $(\C,\circ,\otimes,I)$ is equivalent to the usual SMC of quantum processes in which the objects are Hilbert spaces and the morphisms are completely positive trace-preserving maps, both with the usual tensor product.

The free processes are defined to be the processes in the sub-SMC $\Cf$ corresponding to local operations and classical communication. Here, the splitting into a Hilbert space for Alice and a Hilbert space for Bob is crucial, since only this splitting enables one to know what the terms ``local'' and ``communication'' refer to. More precisely, a process $(\H_A,\H_B)\lra (\H'_A,\H'_B)$ is \em local \em if it is given by a completely positive map $\Phi : \mathcal{L}(\H_A\otimes \H_B) \lra \mathcal{L}(\H'_A\otimes\H'_B)$ which factors into a product
\[
\Phi_A \otimes \Phi_B \: : \:  \mathcal{L}(\H_A\otimes \H_B) \lra \mathcal{L}(\H'_A\otimes\H'_B),
\]
where $\Phi_A \: : \:  \mathcal{L}(\H_A) \lra \mathcal{L}(\H'_A)$ and $\Phi_B \: : \:  \mathcal{L}(\H_B) \lra \mathcal{L}(\H'_B)$.
To define classical communication, we make use of the subset of quantum processes that can be achieved by measuring some preferred basis of the Hilbert space and then repreparing a state that is diagonal in this basis as a function of the measurement outcome. A one-way classical communication is then a process of type $(\H_A,\mathbb{C})\to (\mathbb{C},\H_B)$ or of type $(\mathbb{C},\H_B)\to(\H_A,\mathbb{C})$, but drawn from this special subset of ``decohering'' processes. We can now define $\Cf$ to be the smallest sub-SMC of $(\C,\circ,\otimes,I)$ which contains all local operations and classical communication. As a particular case, one finds that the free processes of type $(\mathbb{C},\mathbb{C})\to (\H_A,\H_B)$, that is, the free states, are precisely the separable states on $\H_A\otimes\H_B$, which are those of the form
\[
\rho_{AB} = \sum_i w_i \rho^A_i \otimes \rho^B_i,
\]
where $(w_i)$ is a probability distribution and $\rho^A_i$ and $\rho^B_i$ are density operators on $\H_A$ and $\H_B$ respectively.  It follows that a state  is a nonfree resource only if it is {\em not} of the separable form.  These, therefore, are the bipartite entangled states.
\end{example}

\begin{example}
\label{ex:entangle}
The theory of \em $n$-partite entanglement \em is very similar: systems are $n$-tuples of Hilbert spaces, and processes are completely positive trace-preserving maps between their tensor products. The free processes are again the ones obtained from local processes and classical communication, wherein local in this context means factorization across the tensor product of $n$ Hilbert spaces, and classical communication among the $n$ parties is any combination of classical communication between any pair of parties.

The theory of bipartite entanglement can be regarded as a subtheory of $n$-partite entanglement by considering only those $n$-tuples of Hilbert spaces for which all components except for two are equal to $\mathbb{C}$. The theory of $k$-partite entanglement is a subtheory of $n$-partite entanglement in the analogous way for $k\le n$.
\end{example}

\begin{example}
The quantum resource theory of \em asymmetry \em with respect to a symmetry group $G$~\cite{gour2008resource,marvian2013theory} 
is defined in terms of the following partitioned process theory.  The systems are pairs $(\H,\pi)$ consisting of a Hilbert space $\H$ and a projective unitary representation $\pi$ of $G$ on $\mathcal{L}(\H)$, or equivalently, on the rays of $\H$. The processes $(\H,\pi)\lra (\H',\pi')$ are the completely positive trace-preserving maps from $\mathcal{\H}$ to $\mathcal{\H'}$, which can be sequentially composed in the usual way.
Parallel composition is given by the usual tensor product of Hilbert spaces and representations,
\[
(\H,\pi) \otimes (\H',\pi') := (\H\otimes\H',\pi\otimes\pi') .
\]
As in the theory of entanglement, the resulting SMC $(\C,\circ,\otimes,I)$ is equivalent to the usual SMC of quantum processes, since the representations $\pi$ are of no relevance to the morphisms and their composition.  These representations are needed, however, to define the distinguished subtheory of free processes.  
A process $(\H,\pi)\lra (\H',\pi')$ associated to the completely positive map $\Phi:  \mathcal{L}(\H)\to \mathcal{L}(\H')$ is free if it is covariant under the action of the group representation, i.e.,
\beq \label{eq:covariance}
\Phi \circ \pi(g) = \pi'(g) \circ \Phi \qquad\forall g\in G.
\eeq
In this case, $\Phi$ is said to be \em $G$-covariant\em.
It is straightforward to check that the group-covariant processes define a sub-SMC $\Cf$.

It follows in particular, that the free states on system $(\H,\pi)$
are the density operators that are invariant under the action of $G$, or \em $G$-invariant\em,
\begin{equation}\label{eq:Ginvariantstate}
\pi(g) \circ \rho = \rho \qquad \forall g\in G.
\end{equation}
The nonfree resources are then those states that are $G$-noninvariant, that is, those that fail to satisfy Eq.~\eqref{eq:Ginvariantstate}.

The quantum resource theory of asymmetry has many applications.  It is useful, for instance, when a quantum dynamical
problem cannot be solved exactly because it is too complex or because one lacks
precise knowledge of all of the relevant parameters, so that one must resort to
inferences based on a consideration of the symmetries. Measures of asymmetry are
perfectly adapted to making such inferences, for instance, they are constants of the
motion in closed-system dynamics. Indeed, in recent work, they have been shown to
provide a significant generalization of Noether's theorem~\cite{MarvianSpekkensNoether}. 
Another application includes the problem of achieving high-precision
measurement standards; for instance, the precision of a Cartesian frame is determined
by the extent to which it breaks rotational symmetry.

Note that one can just as easily define a resource theory of asymmetry for {\em classical} process theories or indeed for {\em any} process theory, including cases that are neither classical nor quantum such as the framework of generalized operational theories~\cite{ContPhys, Chiri2}. The key is that the categorical framework for process theories provides a straightforward means for defining a distinguished subtheory of {\em symmetric processes} as follows. Suppose the process theory is described by a category $\C$.  Then, as long as one can associate to every pair of objects $A,B \in |\C|$, representations $\pi$ and $\pi'$ of the group $G$, i.e. $\pi(g) \in C(A,A)$ and $\pi'(g) \in C(B,B)$, then any process $\Phi \in C(A,B)$ can be said to be $G$-covariant if it satisfies Eq.~\eqref{eq:covariance}.

\end{example}

\begin{example}
In the quantum resource theory of \em athermality \em with respect to a fixed temperature $T$~\cite{JWZGB,brandao2011resource} is defined in terms of the following partitioned process theory.  Systems are pairs $(\H,H)$ consisting of a Hilbert space $\H$ and a Hamiltonian $H$ acting on $\H$. Two such systems can be combined by taking tensor products of their Hilbert spaces and adding their Hamiltonians\footnote{In cases wherein two physical systems are interacting, they need to be described as a joint system which does not decompose into a parallel composition of its physical constituent systems. However, one can still try to find a \emph{different} tensor product decomposition with respect to which the Hamiltonian can be written as a sum of Hamiltonians on the tensor factors, i.e.~one can try to identify virtual subsystems which then decompose the joint system in the resource theory.},
\beq
\label{eq:productHam}
(\H,H)\otimes (\H',H') := (\H\otimes\H',H\otimes\mathbbm{1} + \mathbbm{1} \otimes H').
\eeq
Again, the processes are simply the completely positive trace-preserving maps, and the Hamiltonians are only relevant for the definition of the \em free \em processes. To wit, the sub-SMC of free processes is defined by a generating set that contains three kinds of free process: first, adding ancilla systems in Gibbs states with respect to temperature $T$, that is, states of the form $\frac{1}{Z}e^{-H/kT}$ where $Z={\rm tr}(e^{-H/kT})$ and $k$ is Boltzmann's constant; second, unitaries which are energy-preserving, i.e.~that commute with the total Hamiltonian; third, taking the partial trace over a subsystem.  These free processes are also known as \em $T$-thermal operations\em. It can be shown that a general process is $T$-thermal if and only if it has a Stinespring dilation whose ancilla state is the Gibbs state at temperature $T$ and whose unitary is energy-preserving, which means that it commutes with the Hamiltonian of system + ancilla.
Clearly, every system admits of only one free state, namely, the Gibbs state at temperature $T$ for that system.  Specifically, it is the state $\frac{1}{Z}e^{-H/kT}$ where $Z={\rm tr}(e^{-H/kT})$ and $H$ is the Hamiltonian for that system.  Any other state on that system is then a nonfree resource of athermality relative to the temperature $T$. These include states that are of the Gibbs form for a temperature $T'$ that is different from $T$, as well as states that are not of the Gibbs form at all.

As it turns out, the resource theory of nonuniformity is a full subcategory of the resource theory of $T$-athermality (for any $T$).  One simply needs to restrict the systems in the resource theory of $T$-athermality to those that have a trivial Hamiltonian (so that all states of the system have the same energy)~\cite{nonequilibrium}.  In this case, it suffices to specify the Hilbert space to specify the system; the constraint of being energy-preserving is trivial, so that all unitaries are allowed; and the Gibbs state (for any temperature) is just the completely mixed state.
\end{example}

\subsection{Resource theories of parallel-combinable processes}

Recall that states, which are processes of type $I\to A$, mapping nothing to something, are not generic. 
There are also general processes, which are morphisms of type $A\to B$ where $A$ is not necessarily isomorphic to the unit object $I$.  Such a process should be thought of as taking a nontrivial input.  
In a partitioned process theory, $(\C,\Cf)$, we can also 
consider the extent to which the resource states and resource transformations can be converted one to another by circuits composed entirely of free processes.  Resource theories that include transformations are typically richer than those including just states.  In some contexts, the distinction between resource states and general resource processes is described as {\em static} versus {\em dynamic} resources.

We will consider the resource theory of general processes.  In the next section, we will see that it is possible to make a distinction between combining resource processes in parallel and combining them in an arbitrary way.  In anticipation of this distinction, the resource theories we consider in this section will be termed resource theories of \em parallel-combinable \em processes.  The resource theory that is defined in terms of the process theory with distinguished subtheory $(\C,\Cf)$ will be denoted $\RC(\C,\Cf)$.

Suppose the process theory includes a process $f:A\to B$ and suppose that it is possible to embed $f$ into a circuit composed entirely of processes from the free set in such a way that the overall circuit implements the process $g:C\to D$, 
then we can say that we have successfully converted the resource process $f$ into the process $g$ using the free processes.
 
For example, in the theory of bipartite entanglement, this is precisely what happens with \em quantum teleportation\em:  by consuming the resource of a maximally entangled state on two qubits, one can simulate a single use of a quantum channel that transmits one qubit from Alice to Bob, using local operations and classical communication.  (We shall return to this example later.)

Given a partitioned process theory, $(\C, \Cf)$, we construct a new symmetric monoidal category $\RC(\C, \Cf)$ as follows. The objects of $\RC(\C,\Cf)$ are the processes of $(\C, \Cf)$, that is, $\bigcup_{A, B\in|\C|}\!\!\C(A, B)$.
When considering a particular process $f:A \to B$ as a resource, we think of it as a device which we have available in the laboratory and which we can compose with free operations, both in parallel and sequentially. Lemma~\ref{lm:combsall} will show that any such composition with free operations can be written as a circuit containing $f$ that has the form 
\beq\label{eq:comb}
\includegraphics{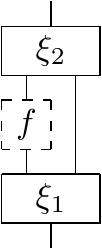}
\eeq
where $\xi_1:A' \to A \otimes Z$ and $\xi_2: B \otimes Z \to B'$ are free processes.
The circuit fragment that has a hole in place of the dashed box, and which takes as input any process of type $A \to B$ and transforms it into a process of type $A' \to B'$, is called a \em 1-comb, \em following the terminology of \cite{comb}.
In more precise terms, 
any 1-comb can be characterized by a triple $(Z, \xi_1, \xi_2)$ where $Z\in|\C|$, $\xi_1\in\Cf(A',A\otimes Z)$ and $\xi_2\in\Cf(B\otimes Z,B')$. One gets a process of type $A'\to B'$ by first applying the free operation $\xi_1$ to $A'$, then inserting a resource process $A\to B$ while doing nothing to $Z$, and finally applying the free operation $\xi_2$ to $B$ and $Z$.

\begin{lem}\label{lm:combsall}
Any circuit that contains a single occurrence of the process $f$ and only free processes otherwise can be put into the form of circuit~\eqref{eq:comb}, that is, in the form of a 1-comb that is built out of free processes, $\xi_1,\xi_2$, with the process $f$ slotted into the hole of the 1-comb.
\end{lem}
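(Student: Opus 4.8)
The plan is to work entirely within the graphical calculus, which is justified by the strictness assumption and the coherence theorem recalled in Remark~\ref{rem:strict}: any circuit is a morphism of $\C$ obtained from the single box $f$ together with free generators, identity wires, and symmetry crossings, using only the two operations $\circ$ and $\otimes$, and two such expressions denote equal morphisms precisely when they are related by diagram deformation. I would prove the claim by structural induction on such an expression for the circuit $C$, carrying along the inductive hypothesis that $C = \xi_2 \circ (f \otimes \mathrm{id}_Z) \circ \xi_1$ for some object $Z \in |\C|$ and some free $\xi_1 \in \Cf(A', A \otimes Z)$, $\xi_2 \in \Cf(B \otimes Z, B')$. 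Throughout, the only facts about $\Cf$ I need are those guaranteed by its being an all-object-including symmetric monoidal subcategory of $\C$: it is closed under $\circ$ and $\otimes$, and it contains every identity and every symmetry.

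For the base case, the expression is the single generator $f$ (were it a single free generator, then $f$ would not occur, contrary to hypothesis); here one takes $Z = I$, $\xi_1 = \mathrm{id}_A$, $\xi_2 = \mathrm{id}_B$, all free. For the sequential step $C = C_2 \circ C_1$, the single occurrence of $f$ lies in exactly one of the two factors, say $C_1$ (the case $f \in C_2$ is identical), so $C_2$ is built from free generators alone and is therefore free. Applying the inductive hypothesis to $C_1$ gives $C_1 = \eta_2 \circ (f \otimes \mathrm{id}_Z) \circ \eta_1$ with $\eta_1, \eta_2$ free, whence $C = (C_2 \circ \eta_2) \circ (f \otimes \mathrm{id}_Z) \circ \eta_1$ is of the desired form with $\xi_2 := C_2 \circ \eta_2$ and $\xi_1 := \eta_1$, both free by closure under $\circ$.

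The parallel step $C = C_1 \otimes C_2$ is where the real content lies. Again $f$ occurs in exactly one factor; say $f \in C_1$, so $C_2 \in \Cf(A_2, B_2)$ is free. By the inductive hypothesis $C_1 = \eta_2 \circ (f \otimes \mathrm{id}_{Z_1}) \circ \eta_1$, and I would absorb the side process $C_2$ into the bottom of the comb by applying the interchange law. Concretely, setting $Z := Z_1 \otimes B_2$, $\xi_1 := \eta_1 \otimes C_2$ and $\xi_2 := \eta_2 \otimes \mathrm{id}_{B_2}$, repeated use of interchange yields $\xi_2 \circ (f \otimes \mathrm{id}_Z) \circ \xi_1 = \bigl(\eta_2 \circ (f \otimes \mathrm{id}_{Z_1}) \circ \eta_1\bigr) \otimes C_2 = C_1 \otimes C_2 = C$, and $\xi_1, \xi_2$ are free because $\Cf$ is closed under $\otimes$ and contains $\mathrm{id}_{B_2}$. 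If instead $f \in C_2$, the mirror-image construction applies, except that $f$ now sits in the right-hand tensor factor; a single free symmetry, conjugating the middle layer, moves $f$ back into the leftmost position so as to restore the exact form $f \otimes \mathrm{id}_Z$, and this symmetry is absorbed into $\xi_1$ and $\xi_2$ without leaving $\Cf$.

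The main obstacle I anticipate is entirely in the parallel step: the bookkeeping needed to absorb a free ``bypass'' process into the comb while preserving the precise normal form $f \otimes \mathrm{id}_Z$, together with the verification via interchange that the resulting $\xi_1, \xi_2$ are genuinely free. Everything else---the reduction to the graphical calculus, the base case, and the sequential step---is routine once one commits to induction on the term structure of the circuit. I would additionally remark that the same conclusion can be read off directly from the graphical calculus by ``foliating'' the diagram at the height of $f$: since the diagram is acyclic and $f$ occurs only once, every other box lies unambiguously before $f$, after $f$, or parallel to it, so that the before-and-parallel boxes constitute $\xi_1$ while the after boxes constitute $\xi_2$; the induction above is simply a rigorous rendering of this picture.
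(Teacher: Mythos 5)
Your proof is correct, but it takes a genuinely different route from the paper's. The paper argues directly on the string diagram: it chooses a \emph{maximal parallel slice} through ${\rm dom}(f)$, uses a symmetry to bring the slice into the form ${\rm dom}(f)\otimes Z$, and then takes $\xi_1$ to be everything below that slice and $\xi_2$ everything above the corresponding slice with ${\rm dom}(f)$ replaced by ${\rm codom}(f)$. You instead proceed by structural induction on the syntactic expression denoting the circuit, using the interchange law to absorb a parallel free factor into the bottom layer of the comb, and a conjugating pair of free symmetries when $f$ sits in the right-hand tensor factor; your key computation in the parallel step (with $Z := Z_1 \otimes B_2$, $\xi_1 := \eta_1 \otimes C_2$, $\xi_2 := \eta_2 \otimes \mathrm{id}_{B_2}$) checks out, and both arguments use exactly the same properties of $\Cf$, namely closure under $\circ$ and $\otimes$ and containment of all identities and symmetries. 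What your induction buys: it is purely equational, needs no formalization of what a ``slice'' of a diagram is, and does not even lean on the coherence theorem except for soundness, so it is arguably more rigorous than the paper's proof, which is explicitly labelled an informal sketch. What the paper's slicing buys: brevity, geometric intuition, and immediate generalization to the $n$-comb lemma of the following subsection, which the paper proves by the same idea via a Joyal--Street decomposition into layers, whereas your induction would need a heavier inductive hypothesis tracking several holes and the permutation of their order. Your closing remark about ``foliating'' the diagram at the height of $f$ is, in essence, the paper's own proof.
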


\begin{proof}
(Informal sketch) By a \em maximal parallel slice \em in a diagram we mean a list of objects $(Z_1, \ldots, Z_n)$ occurring in this diagram such that: (1) all pairs of objects included in it `occur in parallel', and (2) it is `not properly included in a larger parallel slice'. Given an expression in the language of SMCs, consider the diagram, and choose a maximal parallel slice which includes ${\rm dom}(f)$.  Rely on symmetry to achieve that the type of the slice is of the form ${\rm dom}(f)\otimes Z$, and we let $\xi_1$  be the morphism below the slice.  Now consider the same slice, with the exception that ${\rm dom}(f)$ is replaced by ${\rm codom}(f)$, and let $\xi_2$  be the morphism above the slice.  We have obtained the desired form.

\end{proof}

We say that two 1-combs are equivalent, $(Z,\xi_1,\xi_2)\sim(Z',\xi'_1,\xi'_2)$, if one has
\[
\includegraphics{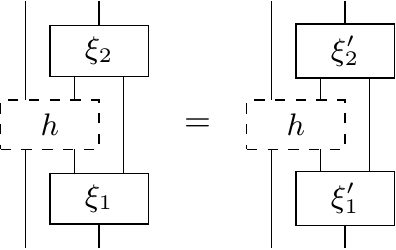}
\]
for any process $h$ between composite systems such that the types match. In other words, two 1-combs are equivalent as soon as they have the same operational behavior. The reason for introducing such an equivalence relation is that the auxiliary system $Z$ can be of arbitrary size; in particular, one can obtain a new $1$-comb from a given one by adjoining another auxiliary system to $Z$, which gets initialized in an arbitrary state and simply discarded afterwards. This new $1$-comb is equivalent to the original one, and as such it seems reasonable to regard both $1$-combs as implementing the same transformation between processes.

As the diagram shows, 1-combs can also be applied to processes between composite systems. However, as far as the possibility of transforming a process $f$ into a process $g$ is concerned, this does not yield any greater generality: the assumption that any identity morphism is a free process implies that $\xi_1$ and $\xi_2$ can be enlarged such as to comprise the additional input and output wires of $f$. 

In any case, we define the morphisms or transformations of type $f\to g$ in the category $\RC(\C,\Cf)$ to be the equivalence classes of 1-combs that turn $f$ into $g$. We use equivalence classes for the principal reason that taking equivalence classes should guarantee in all cases of interest that even if $\C$ is a large category, there is only a set (instead of a proper class) of equivalence classes of 1-combs that turn $f$ into $g$. For example in the quantum case, one can show this by proving that for any $f$ and $g$, there is a Hilbert space dimension $d$ such that any 1-comb $f\to g$ is equivalent to one in which the auxiliary system $Z$ has dimension $\leq d$. We regard such size issues as a minor concern that can safely be ignored, which is what we do in the following.

Sequential composition in $\RC(\C, \Cf)$ is as follows:
\[
\includegraphics{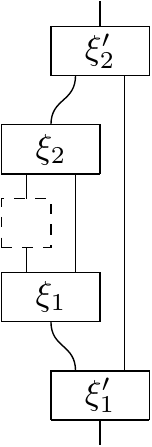}  
\]
This diagram forms a new 1-comb in the obvious way, and it is straightforward to check that this respects equivalence of 1-combs. Similarly, parallel composition works like this:
\[
\includegraphics{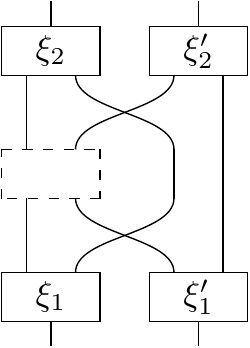}
\]
and the same remarks apply here. 

\begin{thm}
\label{thm:proctorec}
For any partitioned process theory $(\C,\Cf)$, the procedure outlined above allows one to define a symmetric monoidal category $\RC(\C,\Cf)$, and this can be interpreted as a resource theory in the sense of Definition~\ref{def:resourcetheory}.
\end{thm}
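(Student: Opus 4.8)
The plan is to verify directly that $\RC(\C,\Cf)$, equipped with the data specified above, satisfies the axioms of a symmetric monoidal category; as in the proof of Theorem~\ref{thm:statetorec}, essentially every axiom reduces to the corresponding property of $\C$ together with the closure properties of $\Cf$, so the verification can be carried out in the graphical calculus. By Remark~\ref{rem:strict} it suffices to work with $\C$ strict, so that associators and unitors are invisible and only the symmetry requires genuine attention. Lemma~\ref{lm:combsall} does the real work upstream, guaranteeing that every admissible way of surrounding a resource process by free processes is already captured by a $1$-comb, so that the morphisms of $\RC(\C,\Cf)$ as defined really do exhaust all free conversions $f\to g$.

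First I would establish that $\RC(\C,\Cf)$ is a category. The identity morphism on an object $f:A\to B$ is the equivalence class of the trivial $1$-comb $(I,\mathrm{id}_A,\mathrm{id}_B)$, which slots $f$ into an empty surrounding circuit and returns $f$ itself; it is clearly a two-sided unit for the sequential composition depicted above. Associativity of sequential composition amounts to the statement that nesting three $1$-combs may be bracketed in either order, which is immediate once the diagram is drawn, since both bracketings produce the same underlying circuit in $\C$. The one genuine check here is well-definedness on equivalence classes: sequential composition is defined on representatives $(Z,\xi_1,\xi_2)$, and I must verify that replacing either factor by an equivalent $1$-comb yields an equivalent composite. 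This follows because the defining condition for equivalence of $1$-combs quantifies over all processes $h$ of matching type, and slotting the other (fixed) comb into that condition exhibits the required equality.

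Next I would equip $\RC(\C,\Cf)$ with its monoidal structure. Parallel composition on objects is inherited from $\C$, so that $f\otimes g$ is the process $A\otimes C\to B\otimes D$, and parallel composition on morphisms is the juxtaposition of $1$-combs shown above; bifunctoriality, i.e.\ the interchange law relating $\circ$ and $\otimes$, holds because it holds for the underlying circuits in $\C$. The unit object is $1_I$, the identity on the tensor unit of $\C$, and strictness of the unit and associativity isomorphisms is inherited from the strictness of $\C$ assumed in Remark~\ref{rem:strict}. The symmetry $f\otimes g\to g\otimes f$ is the $1$-comb $(I,\sigma,\sigma')$ whose free processes are the swap morphisms $\sigma:C\otimes A\to A\otimes C$ and $\sigma':B\otimes D\to D\otimes B$ of $\C$; these lie in $\Cf$ because $\Cf$ was assumed to be a symmetric monoidal subcategory of $\C$ and hence inherits the symmetries. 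Naturality of this family, together with the hexagon and unit coherence axioms, then reduces term-by-term to the corresponding coherence equations in $\C$, which hold by assumption.

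The step I expect to deserve the most care is confirming that all of the above operations are well-defined on \emph{equivalence classes} of $1$-combs rather than merely on representatives, and that the resulting hom-sets are genuine sets. Well-definedness must be checked uniformly for sequential composition, parallel composition, and the symmetry, each time by appealing to the ``for all $h$'' formulation of comb equivalence; the argument is routine but must be applied consistently in each case. The sethood of $\RC(\C,\Cf)(f,g)$ is the subtler point flagged in the text: a priori the auxiliary system $Z$ ranges over a proper class of objects, and it is precisely the passage to equivalence classes that is intended to collapse this to a set, as illustrated by the bound on $\dim Z$ available in the quantum case. I would treat this size issue exactly as the text proposes, regarding it as a minor concern to be handled case-by-case, and otherwise present the verification as a diagrammatic check mirroring the proof of Theorem~\ref{thm:statetorec}.
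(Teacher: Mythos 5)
Your proposal matches the paper's proof in essentially every respect: the paper's Appendix~A likewise verifies the SMC axioms directly, takes the identity on $f$ to be the comb $(I,1_{\mathrm{dom}(f)},1_{\mathrm{codom}(f)})$ and the tensor unit to be $1_I$, establishes bifunctoriality by equating the two composite combs via the graphical calculus, and defines the symmetry as the comb $(I,\sigma_1,\sigma_2)$ built from the swaps of $\C$, which are free precisely because $\Cf$ is a sub-SMC. Your additional attention to well-definedness on equivalence classes and to size issues corresponds to points the paper handles in the main text (where it declares them straightforward or a minor concern), so this is the same argument, carried out with slightly more explicit bookkeeping.
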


The proof is provided in Appendix~\ref{appendixA}.

Note that our construction of the category $\RC(\C,\Cf)$ is a souped-up version of the twisted arrow category construction~\cite[p.227]{MacLane}. More precisely, in the conventional twisted arrow category associated to $\C$, the objects are the processes $f$ and the morphisms between two such processes are the $1$-combs~\eqref{eq:comb} in which the ancilla system $Z$ is trivial. This conventional twisted arrow category can also be regarded as the category of elements of the hom-functor $\mathrm{hom} : \C \times \C^\mathrm{op} \to \Set$~\cite{nLabtac}; we suspect that our $\RC(\C,\C)$, or even the general $\RC(\C,\Cf)$, arises in a similar way from a variant of this construction for symmetric monoidal categories rather than plain categories.

We now present some examples of resource theories of parallel-combinable processes that can be defined in terms of a partitioned process theory.  We begin by demonstrating how the resource theory of communication channels of Example~\ref{ex:shannon} can be cast in this form.  

\begin{example}
\label{ex:Shannon}

The resource theory of \em two-way classical communication protocols\em.
We start with a variant of the SMC of classical stochastic processes of Example~\ref{ex:classicalstochasticprocesses}. The objects in this new category are pairs of finite sets, that is $(A,B)$ with $A,B\in|\FinStoch|$, whose parallel composition is defined componentwise,
\[
(A,B) \otimes (A',B') := (A\otimes A',B\otimes B').
\]
Processes of type $(A, B)\to (A', B')$ can then be taken to be the elements of $\FinStoch(A\otimes B, A'\otimes B')$, that is, as stochastic maps from $A \times B$ to $A' \times B'$, depicted as:
\[
\includegraphics{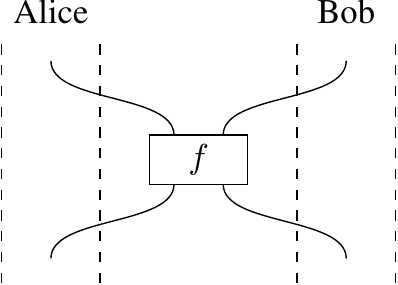}
\]
where the left and right regions refer to Alice and Bob respectively. These maps constitute a mathematical model for noisy two-way communication protocols.

The free processes are those that are generated by local operations and states of shared randomness. Just as in Example~\ref{ex:bientangle}, the local operations are defined to be those processes $(A,B)\to (A',B')$, represented by stochastic maps $\xi\in\FinStoch(A\otimes B,A'\otimes B')$, which can be factored into a parallel composition themselves,
\[
\xi = \xi_A \otimes \xi_B \: : \: A\otimes B \lra A'\otimes B',
\]
where $\xi_A \in\FinStoch(A , A')$ and $\xi_B \in\FinStoch(B,B')$.  These are depicted graphically as:
\[
\includegraphics{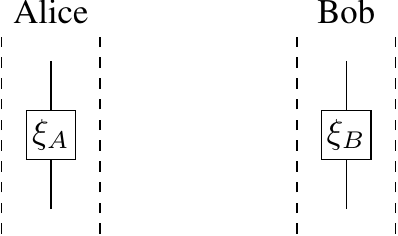}
\]
 The local operations include states $\xi:I \lra A\otimes B$ that factor into a product state $\xi_A \otimes \xi_B\: : \: I \lra A\otimes B$.  The free processes, however, include more than just the product states; they also include states of shared randomness, that is, states that are not of the product form.  In other words, \em all \em states  $\xi:I \lra A\otimes B$ are included in the free processes,
 \[
 \includegraphics{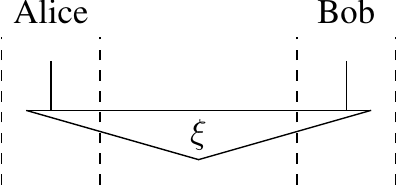}
\]
which models the assumption that shared randomness is free.

The nonfree resources, therefore, are the processes that have nontrivial inputs and which do not factor across the Alice-Bob partition.  These include processes that describe a single use of a communication channel from Alice to Bob, or from Bob to Alice.   
 For example, the process of Alice sending data to Bob (via a channel that has input type $A$ and output type $B$) has type $(A, I)\to (I, B)$ and can be depicted as a morphism $f\in\FinStoch(A\otimes I, I\otimes B)$ like this:
\[
\includegraphics{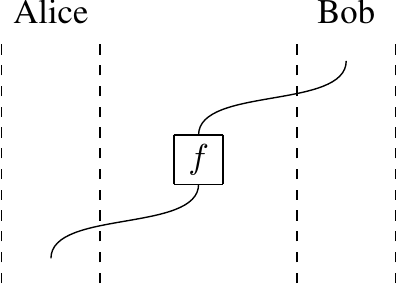}
\]

In the theory of communication channels, one is typically interested in knowing whether a channel $f:(A,I) \to (I,B)$ from Alice to Bob can simulate another channel $g:(A',I) \to (I,B')$.  To answer this question, one must consider the most general circuit of free processes that can be applied to $f$.  

In the case where we allow the sender and recipient to share randomness ahead of time, the processing of the channel looks like this:
\[
\includegraphics{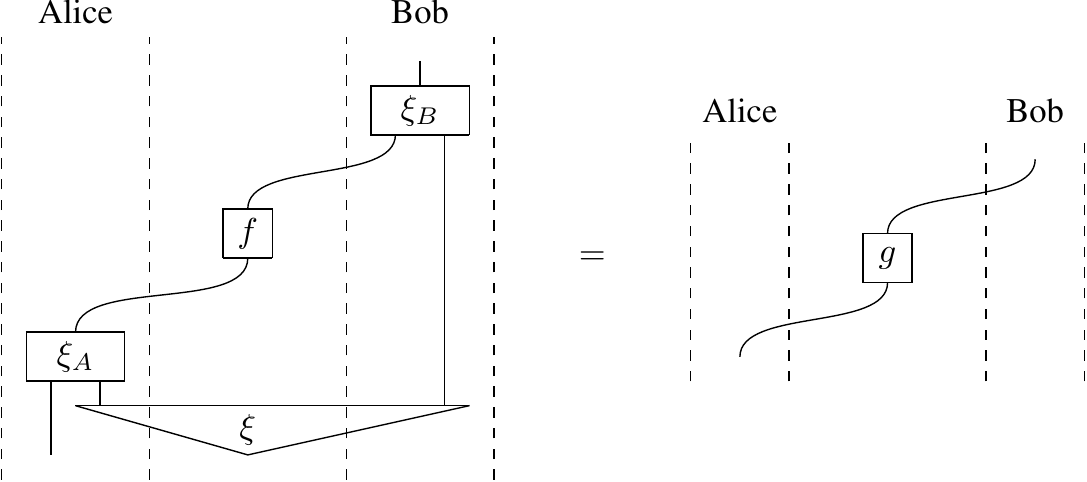}
\]
 One starts with a state of shared randomness $\xi: I \to S_A \otimes S_B$, then a free process $\xi_A: A'\otimes S_A\to A$ is applied to the input $A'$ and Alice's half of the shared randomness, $S_A$ to prepare the system $A$.  This is done in parallel with the identity process on Bob's half of the shared randomness, $S_B$.  Next, the channel $f:(A,I) \to (I,B)$ is implemented.  Finally, a free process $\xi_B: B \otimes S_B \to B'$ is implemented on $B$ and $S_B$.  
Note that here the system $S_B$ is an instance of the ancillary system $Z$ that featured in the general theory.  

The restriction of channels being only parallel-combinable arises if there can be no local processing between the uses of the channels.

The resource theory of two-way classical communication protocols contains as a proper subcategory the resource theory of one-way classical communication channels, which was considered in Example~\ref{ex:shannon}.  It suffices to consider the subcategory defined by the processes of type $(A,I)\mapsto(I,B)$, corresponding to channels from Alice to Bob, and the product states.

Note that the whole construction works likewise with any other SMC of processes in place of classical stochastic processes (see the next example, for instance) and can likewise be realised for any number of parties greater than two.
\end{example}

\begin{example}\label{ex:qcommchannels}
The resource theory of \em two-way quantum communication protocols\em.
We start with a variant of the SMC of quantum processes of Example~\ref{ex:quantumprocesses}.
The objects are pairs of Hilbert spaces, that is $(\H_A,\H_B)$, whose parallel composition is defined componentwise,
\[
(\H_A,\H_B) \otimes (\H_{A'},\H_{B'}) := (\H_A\otimes \H_{A'},\H_B\otimes \H_{B'}).
\]
The free processes include all the local processes, that is, the completely positive maps that factorize with respect to the tensor partition of the process theory, as well as \em all \em states  $s:I \lra A\otimes B$,  which models the fact that quantum and classical correlations are free in this resource theory (analogously to how shared randomness is free in the reource theory of two-way classical communication protocols).
  Everything in the resource theory of quantum communication can be represented graphically in a manner analogous to the theory of classical communication.
\end{example}

\begin{example}
In Example~\ref{ex:bientangle}, we considered the resource theory of {\em states} that resulted from considering the LOCC operations to be free processes.  One can also consider the more general resource theory of {\em processes} that LOCC defines.  Here, the resources include transformations and measurements in addition to states.  The nonfree resource processes include, in addition to the entangled states, the nonLOCC operations, which are sometimes called the ``entangling operations''. An example of such a nonfree resource process is a single use of a quantum channel. Unlike the resource theory of quantum communication channels of Example~\ref{ex:qcommchannels}, where {\em every} channel, classical or quantum, is a nonfree resource, in the theory of entanglement, the classical channels are included among the free processes. 
Here, the ancillary system $Z$ of the general theory is needed to accomodate the possibility of shared correlations between the two parties---either shared randomness or a shared entangled state---but also the possibility of classical communication.  

One can also consider the conversion of an entangled state into a quantum channel. The paradigmatic example of this is the quantum teleportation procotol, which is the conversion of a maximally entangled bipartite pure state into a single use of a noiseless quantum channel by LOCC.
For example, for two qubits,
by consuming the resource of a maximally entangled state $f$,
one can simulate a single use of a quantum channel $g$ that transmits one qubit from Alice to Bob, using local operations $\xi_{meas}$ (i.e.~a measurement) and $\xi_{contr}$ (i.e.~a controlled unitary) and classical communication $\xi_{comm}$.  
Graphically, this is depicted as follows:
\[
\includegraphics{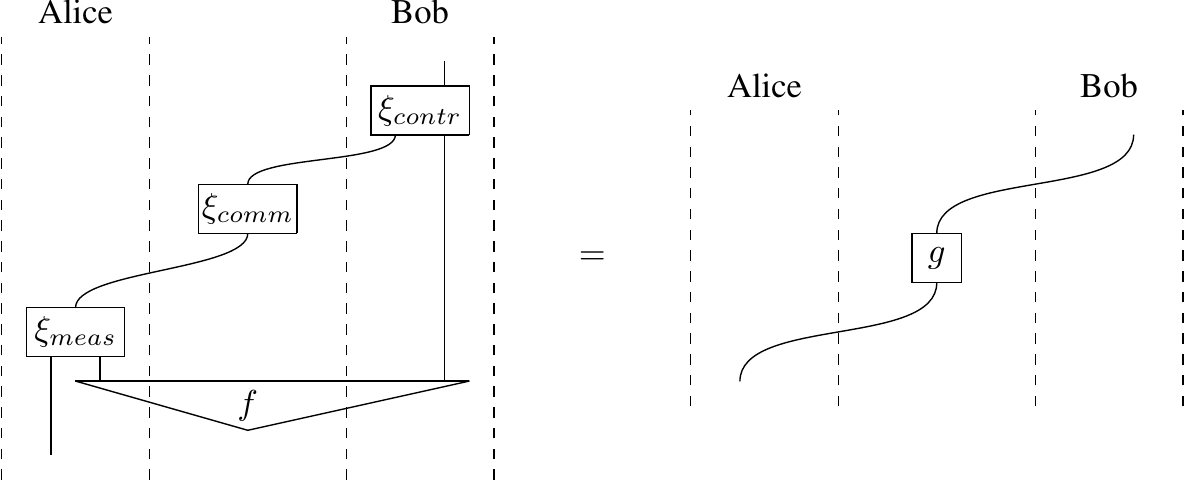}
\] 
\end{example}

\begin{example}
In asymmetry theory, the nonfree resource processes are the non-$G$-covariant operations.  For instance, in the resource theory of rotational asymmetry, the unitary that rotates a system about some spatial axis is a nonfree resource process, as is measuring the component of angular momentum of a system along some spatial axis~\cite{ahmadi2013wigner,marvian2012information}.
\end{example}

\begin{example}
In nonuniformity theory, the nonfree resource processes are the non-noisy operations.  An example is an erasure operation, taking an arbitrary state to a pure state.
\end{example}

\begin{example}
In athermality theory, the nonfree resource processes are the athermal operations. Example are heating, cooling, or doing work (for instance, lifting a weight).  An example of a conversion from a resource state to a general resource process is a heat engine.
\end{example}

\subsection{Resource theories of universally-combinable processes}
\label{sec:squentialresources}

The resource theory $\RC(\C,\Cf)$ has the following odd property. If one has access to two resource processes $f_1,f_2$, and one tries to use these in order to produce a certain target process $g$, then the transformation of type $f_1\otimes f_2\lra g$ looks like this,
\[
\includegraphics{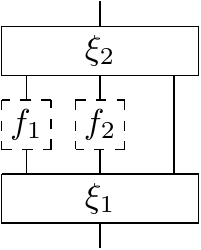}
\]
The problem with this kind of transformation is that $f_1$ and $f_2$ are necessarily composed in parallel. In particular, there is no way to try and produce $g$ by consuming $f_1$ and $f_2$ \em sequentially\em.

Here, we would like to define a different resource theory  in which there is no restriction at all on the manner in which a collection of resource processes may be consumed. Intuitively speaking, we regard resource processes as universally combinable, as in the laboratory where pieces of equipment can be put together in arbitrary ways. We term this the resource theory of universally-combinable processes and denote it by $\UC(\C,\Cf)$.

We now show how to construct $\UC(\C,\Cf)$ as a symmetric monoidal category in terms of $(\C,\Cf)$.
We take the objects of $\UC(\C,\Cf)$ to be finite sequences of processes $(f_1,\ldots,f_n)$. Such a sequence represents a collection of resource processes that we may have available. For all practical purposes, the ordering in the sequence is irrelevant; but in order to actually obtain an SMC, we still need to keep track of the ordering as a matter of syntax. The parallel composition of two objects $(f_1,\ldots,f_n)$ and $(g_1,\ldots,g_m)$ is given by concatenation of sequences,
\[
(f_1,\dots,f_n) \boxtimes (g_1,\ldots,g_m) := (f_1,\ldots,f_n,g_1,\ldots,g_m).
\]
We write ``$\boxtimes$'' instead of ``$\otimes$'' in order not to confuse the composition $\boxtimes$, which we think of as an abstract operation of combining collections of processes, with the parallel composition $\otimes$, which stands for actual parallel execution of processes. For a list $(f)$ containing only one process, we also omit the brackets and simply write $f$. Then $(f_1,\ldots,f_n)$ can be identified with $f_1\boxtimes\ldots\boxtimes f_n$.

Suppose that it is possible to embed each of the processes in the set $(f_1,\ldots,f_n)$ into a circuit composed entirely of processes from the free set in such a way that the overall circuit implements the process $g$, 
then we can say that we have successfully converted the collection of resource processes $f_1\boxtimes\ldots\boxtimes f_n$ into the process $g$ using the free processes.

Consider now a circuit that contains a single occurence of each of the processes $(f_1,\ldots,f_n)$.
In analogy with Lemma~\ref{lm:combsall}, it is not difficult to determine the most general way of converting a collection of processes $f_1\boxtimes\ldots\boxtimes f_n$ into a single process $g$.  One simply generalizes~\eqref{eq:comb}. The generic circuit that has holes in place of the dashed boxes, and which takes as input a collection of processes and transforms them into a single process, is called an \em $n$-comb, \em following the terminology of~\cite{comb}. It looks like this:

\beq\label{eq:ncomb} 
\includegraphics{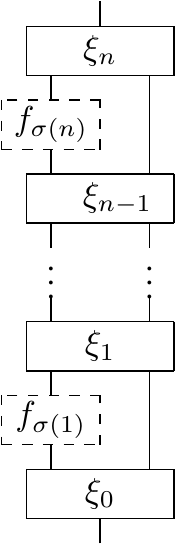}
\eeq
where $\sigma$ is a permutation on $\{1,\ldots,n\}$, and the $\xi_i$ are free processes. The additional wires parallel to the $f_{\sigma(i)}$ carry arbitrary systems $Z_i$. There is a straightforward definition of the equivalence of $n$-combs which generalizes equivalence of ordinary combs: two $n$-combs are equivalent if for all sequences of processes $(h_1,\ldots,h_n)$, where each $h_i$ may have composite input and output systems of which only part connects to the $n$-comb, the resulting composite process obtained by filling the blanks with the $h_{\sigma(i)}$ is the same.

One can now show that any method for combining the resource processes $f_i$ together with free processes into a composite process is of this form:

\begin{lem}
Any circuit that contains a single occurrence of each of the processes $(f_1,\ldots,f_n)$ and only free processes otherwise can be put in the form of the circuit~\eqref{eq:ncomb}, that is, in the form of an $n$-comb built of free processes, $\xi_0,\ldots, \xi_n$, with the processes $(f_1,\ldots,f_n)$ slotted into the holes of the comb in some order (given by the permutation $\sigma$).
\end{lem}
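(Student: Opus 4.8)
The plan is to mimic the maximal-parallel-slice argument used for Lemma~\ref{lm:combsall}, but to apply it $n$ times in a manner compatible with the causal structure of the circuit. Because morphisms in a symmetric monoidal category contain no feedback, the boxes of any circuit carry a natural strict partial order: declare $f_i \prec f_j$ whenever a directed path of wires runs from an output of $f_i$ into an input of $f_j$. I would first record that this relation is acyclic and hence admits a linear extension; choosing one and restricting it to the resource boxes $f_1,\ldots,f_n$ fixes the permutation $\sigma$, so that $f_{\sigma(1)},\ldots,f_{\sigma(n)}$ is a topological ordering of the resources.

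Next I would sequentialize the circuit into single-box layers along this same linear extension. Concretely, a topological sort of all the boxes (resource and free together) lets one rewrite the circuit, up to the coherence isomorphisms of the SMC, as a sequential composite $L_k \circ \cdots \circ L_1$ in which each layer $L_j$ is a single box tensored with identity wires and conjugated by wire permutations (symmetries). This is just the statement that any circuit diagram can be drawn with its boxes at distinct heights, and I would invoke the graphical-calculus coherence theorems already cited in Remark~\ref{rem:strict} to justify that such a reshuffling is an equality of morphisms. Since the restriction of the chosen linear extension to the resource boxes is exactly $\sigma$, every layer lying strictly between the layer of $f_{\sigma(i)}$ and that of $f_{\sigma(i+1)}$ contains only free boxes.

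With the layers in hand I would then define $\xi_0$ to be the composite of all layers strictly below the layer containing $f_{\sigma(1)}$, define $\xi_i$ (for $1 \le i \le n-1$) to be the composite of all layers strictly between $f_{\sigma(i)}$ and $f_{\sigma(i+1)}$, and define $\xi_n$ to be the composite of all layers above $f_{\sigma(n)}$. Each $\xi_i$ is a composition and tensor of free boxes and symmetries, hence free, because $\Cf$ is an all-object-including sub-SMC closed under $\circ$ and $\otimes$ and containing all symmetries (Remark~\ref{rem:strict}). Bundling all the wires that bypass the box $f_{\sigma(i)}$ in its layer into a single object $Z_i$, and absorbing the routing permutations into the adjacent free processes, puts the circuit exactly into the shape of the $n$-comb~\eqref{eq:ncomb}, with $f_{\sigma(i)}$ slotted into the $i$-th hole.

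The main obstacle is one of bookkeeping rather than of mathematical substance: one must check that the side wires can be coherently grouped into the single ancilla strands $Z_i$ and that every permutation needed to bring the layer of $f_{\sigma(i)}$ into the form $\mathrm{dom}(f_{\sigma(i)})\otimes Z_i$ can indeed be pushed into a neighbouring free process. Both follow from strictness together with the freeness of the symmetries, exactly as in the one-comb case. I expect, as with Lemma~\ref{lm:combsall}, that a fully rigorous treatment is most painlessly carried out in the graphical calculus, where the sequentialization and the regrouping are manifest topology-preserving deformations of the diagram, so I would present the argument as an informal diagrammatic sketch and relegate the coherence details to the cited normal-form theorems.
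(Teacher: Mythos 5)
Your proposal is correct and follows essentially the same route as the paper's own (informal) proof: both sequentialize the diagram into single-box layers (the paper via the Joyal--Street decomposition, you via a topological sort, which amounts to the same thing), merge the runs of purely free layers into the free processes $\xi_0,\ldots,\xi_n$, and use the symmetries --- free because $\Cf$ is a sub-SMC --- to route the bypassing wires into the ancilla strands $Z_i$ beside each $f_{\sigma(i)}$. Your treatment is somewhat more explicit about where the permutation $\sigma$ comes from and about the bookkeeping of side wires, but there is no substantive difference in approach.
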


\begin{proof}
(Informal sketch)
We need to show that any expression in the language of symmetric monoidal categories that includes a single occurrence of each of the morphisms $(f_1,\ldots,f_n)$ and only free morphisms otherwise can be put in the form of such an $n$-comb.
We assume that the expression is given in terms of a diagram in the graphical calculus for SMCs, and choose a decomposition into layers as in~\cite[Sec.~1.3]{JS}. Then each layer contains at most one non-trivial process and otherwise wires that carry identity processes. If several consecutive layers contain only free processes, then these can be composed to a composite free process. In the layers containing the $f_i$, one can apply the symmetry in order to move any additional wires to the right of the resource process. This results in the desired form~\eqref{eq:ncomb}.
\end{proof}

At this level, the mathematical structure that most accurately captures these many-to-one transformations is that of a \em symmetric multicategory\em~\cite[2.2.21]{Leinster}, which we denote $\M(\C,\Cf)$; see also~\cite{lambek1989multicategories} for multicategories in general. The objects of this multicategory are precisely all the resource processes. The maps $(f_1,\ldots,f_n)\lra g$ in this multicategory are defined to be the equivalence classes of $n$-combs which transform $(f_1,\ldots,f_n)$ into $g$. These can be composed: given an $m$-comb $(g_1,\ldots,g_m)\lra h$ and an $n_j$-comb $(f_{j,1},\ldots,f_{j,n_j})\lra g_j$ for every $j=1,\ldots,m$, one obtains an $(n_1+\ldots+n_m)$-comb
\[
(f_{1,1},\ldots,f_{1,n_1},\ldots,f_{m,1},\ldots,f_{m,n_m}) \lra h
\]
by plugging the $n_j$-combs into the holes of the $m$-comb. There is also an obvious way to turn an $n$-comb $(f_1,\ldots,f_n)\lra g$ into a new $n$-comb $(f_{\sigma(1)},\ldots,f_{\sigma(n)})\lra g$ for any permutation $\sigma$. We leave it to the reader to verify that this data satisfies the axioms of a symmetric multicategory.

However, we also would like to be able to have transformations which output 
collections of processes, such as $g_1\boxtimes g_2$, rather than individual processes. What should such a transformation look like? One might be tempted to say that it should be an $n$-comb producing a composite process $g_1\otimes g_2$. While this might give rise to a resource theory as well, it is not the appropriate definition to make, since the constituents of a composite process like $g_1\otimes g_2$ cannot be accessed individually, but only in parallel. In particular, given a ``black box'' implementing the process $g_1\otimes g_2$, there is no way of obtaining the process $g_1\circ g_2$ (because one is constrained to a single use of the black box). Hence, according to the philosophy of ``universal combinability'' followed in this section, a transformation which produces $g_1\boxtimes g_2$ from a collection of resource processes $(f_1,\ldots,f_n)$ should produce $g_1$ from a subcollection of the $f_i$'s, and $g_2$ from the remaining $f_i$'s. In the laboratory picture, this corresponds to building $g_1$ from a collection of ingredients and building $g_2$ independently from another collection of ingredients.

The following is a general categorical construction for turning a symmetric multicategory into an SMC:
\begin{defn}
In the resource theory $\UC(\C,\Cf)$, a transformation of type $f_1\boxtimes\ldots\boxtimes f_n\lra g_1\boxtimes\ldots\boxtimes g_m$ consists of a function $\alpha:\set{1,\ldots,n} \to \set{1,\ldots,m}$ and maps
\[
(f_{k_1},\ldots,f_{k_{n_j}}) \lra g_j
\]
in the multicategory $\M(\C,\Cf)$, where $k_1,\ldots,k_{n_j}$ enumerates the elements of $\alpha^{-1}(j)$.
\end{defn}

Intuitively, the function $\alpha$ allocates the resource process $f_i$ to the production of $g_{\alpha(i)}$. The definition entails that every $f_i$ gets allocated to exactly one $g_j$ in this way; in particular, every $f_i$ has to be consumed somewhere. See Remark~\ref{rem:junk} for why we impose this.

It should be clear how to define parallel composition $\boxtimes$ of such transformations. Sequential composition is induced from the composition in the multicategory $\M(\C,\Cf)$. The details are tedious but straightforward, so we omit them.

\begin{example}
The resource theory of classical communication channels considered in Example~\ref{ex:Shannon} can be easily adapted to the case of universally-combinable processes.  This is the appropriate framework for the case wherein the uses of different channels can be interspersed with local processes.  An example of a conversion of channels $(f_{k_1},\ldots,f_{k_{n_j}}) \lra g_j$, for instance, is depicted as follows:
\[
\includegraphics{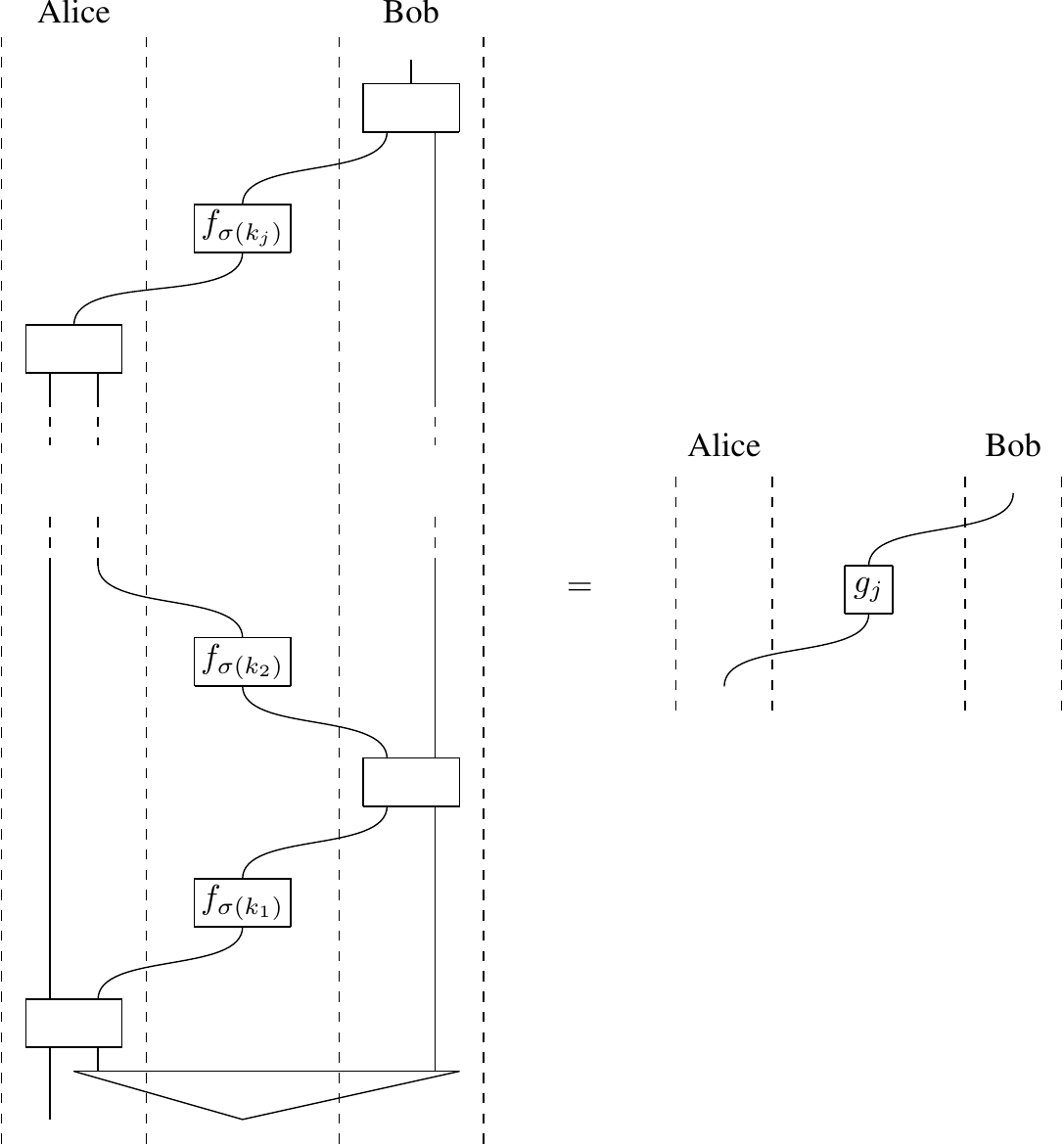}
\]

\end{example}

\section{Theories of resource convertibility}

\subsection{Definition}

Often, the most important questions that one asks of a resource theory $(\D,\circ,\otimes,I)$ concern whether or not a given resource conversion is possible, and not the particular process by which it occurs.
Given certain objects $A,B\in|\D|$, does there exist a transformation $A\to B$, i.e., is $\D(A,B)$ nonempty? 
Answering questions of this type does not require full knowledge of the category $\D$.
It is enough to know whether or not there exists a transformation $A\to B$; how many transformations there are of this type and how they can be found is also a relevant question, but one that we now would like to regard as secondary.
 With this in mind, we write $A\succeq B$ if there exists a transformation of type $A\to B$, and $A\not\succeq B$ if there does not. This ``$\succeq$'' is hence a preordering, meaning a reflexive and transitive binary relation 
\[
A\succeq A\qquad \mbox{and}\qquad A\succeq B,\: B\succeq C \:\Rightarrow A\succeq C.
\]
Intuitively, this says that every $A$ can be transformed into itself; and if $A$ can be transformed into $B$ and $B$ into $C$, then $A$ can also be transformed into $C$.

Now that we are no longer interested in the different morphisms of $\D$, we can as well use lowercase notation for the resource objects without the danger of confusion. This is what we do from now on in order to improve readability.

\begin{defn}
\label{defn:mereres}
A \em theory of resource convertibility \em $(R,+,\succeq,0)$ is a set $R$ equipped with a binary operation $+$, a preorder $\succeq$ and a distinguished element $0\in R$ such that for all $a, b, c \in R$,
\begin{align*}
a + (b + c) &\simeq (a + b) + c, & a + b & \simeq b + a, & a + 0 &\simeq 0 + a, 
\end{align*}
where $\simeq$ is the equivalence relation induced by the preorder, i.e.~$a\simeq b$ stands for $a\succeq b$ and $b \succeq a$.  We require these structures to interact as follows:   for all $a, b, c,d \in R$
\beq\label{eq:posetbefunct}
a \succeq b\ ,\ c \succeq d\  \Rightarrow\  a + c \succeq b + d .
\eeq
\end{defn}

In the context of enriched category theory~\cite{BorceuxStubbe}, this definition can be succinctly summarized by saying that a theory of resource convertibility is a symmetric monoidal category enriched over the poset of truth values $\{0,1\}$.  The correspondence to the definition in terms of the preorder is that $R(a,b)=1$ corresponds to $a\succeq b$, while $R(a,b)=0$ represents $a\not\succeq b$.  Also, the parallel composition ``$\otimes$'' of the SMC corresponds to the binary operation ``$+$'' of the theory of resource convertibility.

\begin{thm}
\label{thm:decat}
Let $(\D,\otimes,\circ,I)$ be a resource theory. If we define 
\[
R:= |\D|,
\]
and for all $a,b\in |\D|$,
\[
a + b := a\otimes b, 
\]
\[
a\succeq b \: :\Longleftrightarrow\: \D(a,b)\neq\emptyset,
\]
\[
0 := I,
\]
then $(R,+,\succeq,0)$ is a theory of resource convertibility.
\end{thm}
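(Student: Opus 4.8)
The plan is to verify each clause of Definition~\ref{defn:mereres} by translating the categorical structure of $\D$ into the order-theoretic language of $R$. The guiding observation is that $a\succeq b$ records only the \emph{existence} of a morphism $a\to b$, i.e.~the nonemptiness of $\D(a,b)$, so every item of categorical data collapses into an assertion about inhabited hom-sets. The binary operation $+$ is well defined because it is just the action of $\otimes$ on objects, $R\times R\to R$. I expect no deep obstacle anywhere; the only genuinely structural input is the symmetry morphism (used for commutativity), and the one point requiring vigilance is to discharge the monoid laws up to the induced equivalence $\simeq$ rather than up to strict equality.

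First I would establish that $\succeq$ is a preorder. Reflexivity $a\succeq a$ holds because the identity $1_a\in\D(a,a)$ witnesses $\D(a,a)\neq\emptyset$. Transitivity follows from sequential composition: if $f\in\D(a,b)$ and $g\in\D(b,c)$, then $g\circ f\in\D(a,c)$, so $a\succeq b$ and $b\succeq c$ together yield $a\succeq c$. This uses nothing beyond the plain category axioms of $\D$.

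Next I would verify the three monoid relations, all up to $\simeq$. Invoking Remark~\ref{rem:strict}, I may assume $\D$ is strict, so $a\otimes(b\otimes c)=(a\otimes b)\otimes c$ and $a\otimes I=a=I\otimes a$ hold as genuine equalities of objects; since equal objects are trivially $\simeq$ (the identity witnesses both directions), the associativity clause $a+(b+c)\simeq(a+b)+c$ and the unit clause $a+0\simeq 0+a$ are immediate. The commutativity clause $a+b\simeq b+a$ is the only place a non-identity morphism is genuinely needed, and so is the mild obstacle of the argument: the symmetry isomorphism $\sigma_{a,b}\colon a\otimes b\to b\otimes a$ gives $a\otimes b\succeq b\otimes a$, and its inverse $\sigma_{a,b}^{-1}$ gives the reverse relation, whence $a\otimes b\simeq b\otimes a$. (Without the strictness assumption the same pattern would cover associativity and the unit via the associator and unitors, since these too are isomorphisms.)

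Finally I would prove the interaction axiom~\eqref{eq:posetbefunct}. Given $a\succeq b$ and $c\succeq d$, choose witnesses $f\in\D(a,b)$ and $g\in\D(c,d)$; their parallel composite $f\otimes g\in\D(a\otimes c,\,b\otimes d)$ is a morphism of type $(a+c)\to(b+d)$, so $\D(a+c,b+d)\neq\emptyset$ and hence $a+c\succeq b+d$. This step is precisely the object-level shadow of bifunctoriality of $\otimes$, and it is what makes the monoidal structure descend to a monotone operation on the preorder. With strictness in force, the remaining verifications are entirely mechanical, so the substance of the theorem is exactly this dictionary: identities give reflexivity, composition gives transitivity, the symmetry gives commutativity, and the tensor of morphisms gives compatibility of $+$ with $\succeq$.
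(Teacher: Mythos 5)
Your proof is correct and matches the paper's approach: the paper dismisses this verification with the single word ``Straightforward,'' and your argument supplies exactly the routine details intended — identities for reflexivity, composition for transitivity, strictness (Remark~\ref{rem:strict}) plus the symmetry isomorphism for the monoid laws up to $\simeq$, and $f\otimes g$ for the compatibility axiom~\eqref{eq:posetbefunct}. No gaps; nothing further is needed.
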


\begin{proof}
Straightforward.
\end{proof}

This construction captures the idea that we are only interested in whether a transformation of type $a\to b$ exists or not. One can think of it as a ``partial decategorification'': instead of \em completely \em forgetting the morphisms in $\D$ and the way in which they compose, the resulting theory of  resource convertibility only remembers a small remnant of categorical structure, namely whether there exists a morphism between any pair of objects or not.

In terms of enriched category theory, one can understand this construction as a change of base from enrichment over $\Set$ to enrichment over the poset of truth values $\{0,1\}$, along the monoidal functor $\Set\to\{0,1\}$ which maps the empty set to $0$ and any non-empty set to $1$.

One of the main goals when studying a resource theory is to find necessary and sufficient criteria for when a transformation of type $a\to b$ exists, i.e.~for when $a\succeq b$. In other words, one tries to find a way of characterizing the ordering relation $\succeq$. An answer to this problem typically consists in an algorithm which takes resource objects $a$ and $b$ as input and returns the answer to the question ``Is $a\succeq b$ or $a\not\succeq b$?'' 

The prototypical example is the theory of bipartite entanglement (Example~\ref{ex:bientangle}). One can start with the theory of bipartite entanglement as a partitioned process theory, defined by LOCC, as in Example~\ref{ex:bientangle}, one can turn it into a resource theory as in Theorem~\ref{thm:proctorec}, and then regard it as a theory of resource convertibility as in Theorem~\ref{thm:decat}.  Nielsen has shown that the convertibility relation between pure states can be algorithmically decided with the help of a majorization criterion, which constitutes a necessary and sufficient condition for $a\succeq b$~\cite{Nielsen}.

To emphasize the importance of the monoid structure, we will present two toy examples of resource theories that differ \em only \em in their monoid structure.

\begin{example}
\label{ex:food}
Consider the  resource theory of food, where for simplicity we only consider apples and bananas. A resource object is then a pair $(a,b)$ where $a,b\in\mathbb{N}$ stand for a number of apples and a number of bananas. E.g.~$(3,4)$ is the resource object ``3  apples and 4 bananas''. These can be combined via $\otimes$ in the obvious way by adding the numbers of each type of food,
\[
(a,b) \otimes (a',b') := (a+a',b+b').
\]
The only transformations that we declare to exist in this resource theory correspond to eating food: for example, there is one and only one morphism from $(2,3)$ to $(1,0)$, while there is no morphism going the other way. As is necessarily the case with a mathematical model of the real world, this toy model only captures a very small part of the phenomena of interest. A slightly more realistic model would include the gain of energy of a person from eating food as well as some agricultural processes used for producing food. The present example is as simplistic as possible in order to illustrate the mathematical structures involved.
\end{example} 

While this example has a strong quantitative aspect in the sense that the value of an object as a resource depends strongly on the number of pieces of each type of food, there are also resource theories which do not display this aspect at all:

\begin{example}
\label{ex:know}
While having two apples may be considered better than having only one apple, there also are entities for which possessing a large quantity is no better than possessing a small quantity. For example, \em proficiency \em has this property. For simplicity, let us assume that we only consider two sorts of proficiency, namely the proficiency level in arithmetic, measured by a number $a \in\mathbb{N}$, and the proficiency level in biology, measured by a number $b\in\mathbb{N}$. In other words, an object in our simplified resource theory of proficiency is a pair $(a,b)$ consisting of two proficiency levels $a,b\in\mathbb{N}$.

When combining two such pairs, for example when seeking to determine the proficiency level in each field of a group consisting of two people, one should take it to be the higher one of the constituent proficiency levels,
\[
(a,b) \otimes (a',b') := \left( \max(a,a'), \max(b,b') \right) .
\]
In other words, 
the tensor of two proficiency levels is the maximum of the two.
Having an expert in arithmetic and an expert in biology is just as good as having one person who is an expert in both, 
or
of having \em two \em people that are expert in both.

Similar to the case of the resource theory of food, the only kind of transformation that we consider is losing proficiency.
More precisely, we stipulate that there is one and only one transformation of type $(a,b)\to (a',b')$ if $a\geq a'$ and $b\geq b'$.
\end{example}

In both cases, a resource object $(a,b)$ can be transformed into $(a',b')$ if and only if $a\geq a'$ and $b\geq b'$. Hence the preorder ``$\succeq$'' can, for both theories, be illustrated by the Hasse diagram
\[
\includegraphics{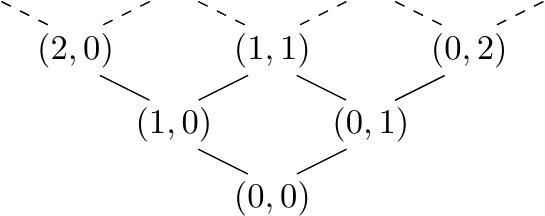}
\]
which is the partially ordered set
\[
(\mathbb{N}, \geq) \times(\mathbb{N}, \geq). 
\]
However, the binary operation ``$+$'' that defines the monoid structure is different: in the theory of food it is given by component-wise addition of natural numbers; in the theory of proficiency, it is given by the order-theoretic supremum. The theories have underlying commutative monoids
\[
(\mathbb{N}, +, 0) \times(\mathbb{N}, +, 0)\qquad{\rm vs.}\qquad (\mathbb{N}, \vee, 0) \times(\mathbb{N}, \vee, 0) .
\]
This difference makes the two theories behave quite differently. For example, catalysis (Definition~\ref{catalysis}) is impossible in the resource theory of food since borrowing food from someone else does not alleviate our hunger if we have to return the borrowed food eventually. In contrast, catalysis is possible in the resource theory of proficiency, since borrowing a proficient person who teaches us their knowledge does not degrade that person's proficiency.

\begin{rem}  
If $a$ and $b$ are objects in the original resource theory $\D$, then $a\simeq b$ does not mean that $a$ and $b$ are isomorphic in $\D$. Rather, it only means that there is a morphism of type $a\to b$ and one of type $b\to a$, but it does not follow that either composition $a\to b\to a$ or $b\to a\to b$ is necessarily equal to the identity morphism.

As long as we are only interested in \em whether \em a certain transformation $a\to b$ is possible, we should consider $\D$ as a theory of resource convertibility, and the adequate notion of $a$ and $b$ ``being the same'' is given by ``$\simeq$''. However, if we are also interested in \em how \em to turn $a$ into $b$, then we need to consider $\D$ as a full-fledged SMC, and the adequate notion of ``being the same'' is isomorphism. For example, if $a$ and $b$ are isomorphic, then there is a bijective correspondence between transformations $a\to c$ and transformations $b\to c$ for any $c$. However, if we only know that $a\simeq b$, then this is not necessarily the case; we only know that a transformation $a\to c$ \em exists \em if and only if a transformation $b\to c$ exists.
\end{rem}

\begin{rem}

Just as the appropriate notion of ``being the same'' between two categories is that of an equivalence of categories instead of isomorphism, the notion of being the same for two theories of resource convertibility, $R$ and $S$, is weaker than isomorphism. We say that $R$ and $S$ are \em equivalent\em, if there are functions $f:R\to S$ and $g:S\to R$ such that $g(f(a)) \simeq a$ for all $a\in R$, and $f(g(b)) \simeq b$ for all $b\in S$.

Alternatively, if we \em define \em two resource objects $a,b\in R$ to be equal, $a=b$, if $a\simeq b$, then this reduces precisely to the definition of isomorphism, which reads $g(f(a)) = a$ for all $a\in R$, and similarly the other way around. After all, $a\simeq b$ means that $a$ and $b$ are perfectly interchangeable as resource objects, and as such there is no need to distinguish them. For this reason, we believe that ``$\simeq$'' is the appropriate definition of equality of resource objects. So although we always write ``$\simeq$'' in the context of a theory of resource convertibility, there is no harm or loss of generality in replacing this by ``$=$''.
\end{rem}

\begin{remark}
There are examples of mathematical structures satisfying the conditions of Definition~\ref{defn:mereres} that do not have an obvious interpretation in terms of resource theories. One such example is lattice ordered groups~\cite{BirkhoffLatticeGroup}, and another one is the commutative case of quantales \cite{Mulvey},  where there is a distributive law between the monoid multiplication and the suprema of the ordering. Yet another example is the Cuntz semigroup from $C^*$-algebra theory~\cite{Cuntz}. We hope that this makes it clear that Definition~\ref{defn:mereres} is also of purely mathematical interest, and illustrates part of the wide range of phenomena that it captures.
\end{remark}

\subsection{A bit of phenomenology}

We now describe some of the phenomena that may occur in theories of resource convertibility as in Definition~\ref{defn:mereres} and prove some general results about when these phenomena occur and when they do not. We regard this as the very beginnings of a comprehensive study of the phenomenology of theories of resource convertibility and the development of general results about them. Throughout, $R$ is a theory of resource convertibility.

The first such phenomenon that we would like to discuss has its origin in the resource theory of chemistry (Example~\ref{ex:chem}): \em catalysis\em.

\begin{defn}
\label{catalysis}
A resource $c\in R$ is a \em catalyst \em for $a,b\in R$ if $a\not\succeq b$, but $a+c \succeq b+c$.
\end{defn}

So although there is no way to transform $a$ into $b$ in $R$, it \em is \em possible to do so if one also has the catalyst $c$ available, since then one can turn $a+c$ into $b+c$. The power of a catalyst lies in the fact that $c$ is also a result of the transformation, and hence can be reused in other transformations. Only one copy of the catalyst is needed in order to transform any number of $a$'s into $b$'s.

\begin{example}
\label{ex:catalyst}
In the resource theory of compass-and-straightedge constructions, presented in the introduction,
 the figure consisting of a circle of unit radius and a square of area $\pi$ can be regarded as a catalyst for the transformation of turning a circle into a square of the same area, i.e.~for ``squaring the circle''. This works as follows: for a given circle, start by constructing its center. Its radius is hence available as a line segment, which can be compared with the radius of the unit circle. Scaling the reference square of area $\pi$ by the same factor yields a square of the same area as the original circle.
\end{example}

It is an important question to determine whether the transformation of some resource $a$ into some resource $b$ is possible with the help of a catalyst, or if it remains impossible even with any other resource as a candidate catalyst. More specifically, it is also relevant to know whether catalysis is possible at all in $R$:

\begin{defn}
$R$ is \em catalysis-free \em if 
\[
 a + c \succeq b + c \:\Longrightarrow\: a\succeq b.
\]
\end{defn}

In order to showcase the possibility of deducing general mathematical theorems that apply to all theories of resource convertibility, we now derive a criterion for when such a theory is catalysis-free. This involves some other properties that a theory of resource convertibility may or may not possess and which we now define.

\begin{defn}
\label{def:nonintact}
$R$ is \em non-interacting \em if
\[
a\succeq b_1 + b_2 \quad\Longrightarrow\quad \exists a_1,a_2\in R, \: a \simeq a_1 + a_2,\: a_1 \succeq b_1, \: a_2\succeq b_2.
\]
\end{defn}

Intuitively, $R$ is non-interacting if every transformation which outputs a combination of two resources can be decomposed into two transformations, each of which outputs a constituent resource. In the theory of ordered abelian groups, this condition is also known as the \em Riesz decomposition property\em.

As a general class of examples, take the resource theory $\UC(\C,\Cf)$ generated by a theory with free processes $(\C,\Cf)$.  Such a theory is non-interacting by definition, since in order to produce a collection of resource processes, one needs to produce each one individually.

\begin{defn}
$R$ is \em quantity-like \em if 
\[
a_1 + a_2 \simeq b_1 + b_2, \: a_1\succeq b_1 \quad\Longrightarrow\quad b_2\succeq a_2
\]
\end{defn}

One may think of this as follows: if $a_1+a_2\simeq b_1+b_2$, then $a_1+a_2$ and $b_1+b_2$ are of equal value as resources; but by the assumption, $a_1\succeq b_1$, the resource $a_1$ is at least as valuable as $b_1$. Hence, assuming a certain conservation-of-value property, we conclude that $b_2$ must be at least as valuable as $a_2$. Since we have not made precise the notion of value, this explanation is nothing but an intuition. In a quantity-like resource theory, the resource objects typically have an extensive flavour, like ``volume'' or ``mass''.

For example, the resource theory of food (Example~\ref{ex:food}) is quantity-like, while the theory of proficiency (Example~\ref{ex:know}) is not.

\begin{thm}
If $R$ is non-interacting and quantity-like, then $R$ is catalysis-free.
\end{thm}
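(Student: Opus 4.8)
The plan is to derive the implication directly by unfolding the two hypotheses in sequence, with no induction or auxiliary construction needed. Assume the catalysis premise $a + c \succeq b + c$; the goal is to establish $a \succeq b$. The first step is to apply the \emph{non-interacting} property to this relation, reading the target $b + c$ as the combination $b_1 + b_2$ with $b_1 = b$ and $b_2 = c$. Since $a + c \succeq b + c$, non-interactingness supplies resources $a_1, a_2 \in R$ with
\[
a + c \simeq a_1 + a_2, \qquad a_1 \succeq b, \qquad a_2 \succeq c.
\]
At this point $a_1 \succeq b$ already handles the $b$-part, so by transitivity of $\succeq$ it suffices to prove $a \succeq a_1$.

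The second step is to extract $a \succeq a_1$ from the \emph{quantity-like} property. I would instantiate that property on the relation $a_2 + a_1 \simeq c + a$, which follows from $a + c \simeq a_1 + a_2$ together with commutativity of $+$ (valid up to $\simeq$, as guaranteed by Definition~\ref{defn:mereres}). Matching this to the pattern $a_1' + a_2' \simeq b_1' + b_2'$ with $a_1' = a_2$, $a_2' = a_1$, $b_1' = c$, $b_2' = a$, the side condition required by quantity-likeness is $a_1' \succeq b_1'$, i.e.~$a_2 \succeq c$, which is exactly one of the outputs of the first step. The conclusion $b_2' \succeq a_2'$ then reads $a \succeq a_1$, as desired.

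Combining the two steps by transitivity, $a \succeq a_1 \succeq b$ yields $a \succeq b$, which is precisely the catalysis-free condition. The argument is short and essentially a matter of correct bookkeeping; the only place demanding care — and what I would flag as the main (though minor) obstacle — is lining up the roles of the variables when invoking quantity-likeness, in particular tracking which summand plays the role of $a_1'$ versus $b_1'$ and using commutativity of $+$ to bring $a_2 + a_1 \simeq c + a$ into the exact syntactic shape of the definition.
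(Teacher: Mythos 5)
Your proof is correct and follows essentially the same route as the paper's: apply non-interactingness to $a + c \succeq b + c$ to obtain $a + c \simeq a_1 + a_2$ with $a_1 \succeq b$ and $a_2 \succeq c$, then use quantity-likeness (with the summands swapped via commutativity) to get $a \succeq a_1$, and finish by transitivity. Your version is in fact slightly more scrupulous than the paper's, which glosses over the variable-matching and commutativity bookkeeping that you spell out.
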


\begin{proof}
Assume that $a + c \succeq b + c$. Since $R$ is non-interacting, we need to have $a + c \succeq a_1 + a_2$ with $a_1 \succeq b$ and $a_2 \succeq c$. Because $R$ is quantity-like, $a + c \succeq a_1 + a_2$ and $a_2 \succeq c$ implies $a \succeq a_1$. From $a \succeq a_1$ and $a_1 \succeq b$ and using transitivity, we conclude that $a\succeq b$.
\end{proof}

This is our first general result about theories of resource convertibility. It provides a non-trivial sufficient criterion for the absence of catalysis.

\begin{prop}[No-cloning]
\label{prop:noclone}
If $R$ is quantity-like, then
\[
a \succeq a+a \quad\Longleftrightarrow\quad 0 \succeq a.
\]
\end{prop}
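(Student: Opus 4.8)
The plan is to prove the biconditional by establishing its two implications separately. The reverse direction, $0 \succeq a \Rightarrow a \succeq a+a$, is the routine one and needs only the monotonicity axiom~\eqref{eq:posetbefunct}; the quantity-like hypothesis enters only in the forward direction.

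First I would dispatch the easy implication. Assuming $0 \succeq a$, I combine it with the reflexivity instance $a \succeq a$ via the compatibility of $\succeq$ with $+$ from~\eqref{eq:posetbefunct} to obtain $a + 0 \succeq a + a$. The unit law $a + 0 \simeq a$ (neutrality of $0$) gives $a \succeq a+0$, and transitivity of the preorder then yields $a \succeq a+a$. No appeal to quantity-likeness is made here.

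For the forward implication, $a \succeq a+a \Rightarrow 0 \succeq a$, the idea is to instantiate the quantity-like axiom cleverly. I would set $a_1 := a$, $a_2 := a$, $b_1 := a+a$ and $b_2 := 0$. With these choices the premise $a_1 + a_2 \simeq b_1 + b_2$ required by the definition of quantity-like becomes $a + a \simeq (a+a) + 0$, which holds by the unit law, while the second premise $a_1 \succeq b_1$ is precisely the assumed $a \succeq a+a$. The quantity-like property then outputs $b_2 \succeq a_2$, that is $0 \succeq a$, as desired.

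The only real obstacle is spotting this substitution. The subtlety is that the quantity-like axiom demands an equivalence $a_1+a_2 \simeq b_1+b_2$ as input, whereas all we are handed is the one-sided relation $a \succeq a+a$. The resolution is to place that one-sided relation into the $a_1 \succeq b_1$ slot of the axiom and to render the equivalence premise trivial by padding with the neutral element $0$; this is exactly what the choices above accomplish. Once the instantiation is identified, both directions are immediate, and I would also note in passing that only the forward direction actually consumes the quantity-like assumption.
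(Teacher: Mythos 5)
Your proof is correct, and its two halves match the paper's in spirit: the reverse implication is the same monotonicity-plus-neutrality argument, and the forward implication is a single application of the quantity-like axiom after padding with $0$. However, your instantiation of that axiom differs from the paper's, and the difference is substantive: it makes yours the more rigorous of the two. The paper justifies the forward direction by feeding the facts $a+0\succeq a+a$ and $a\succeq a$ into quantity-likeness; matched against the definition, this places the hypothesis into the slot that demands a \emph{two-sided} equivalence $a_1+a_2\simeq b_1+b_2$, and the missing half $a+a\succeq a+0$ is not derivable from $a\succeq a+a$ --- for instance, in $R=(\mathbb{Z},+,0)$ ordered by $a\succeq b\Leftrightarrow a\le b$, which is quantity-like, one has $1\succeq 1+1$ but $1+1\not\succeq 1$. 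Your instantiation $a_1=a$, $a_2=a$, $b_1=a+a$, $b_2=0$ sidesteps this entirely: the $\simeq$ slot is filled by the trivial equivalence $a+a\simeq(a+a)+0$, and the one-sided hypothesis $a\succeq a+a$ occupies the slot that asks only for $\succeq$. In other words, your argument is what the paper's one-liner needs to be in order to follow from the definition of quantity-like exactly as stated, whereas the paper's wording goes through only if one strengthens that definition by weakening its $\simeq$ premise to a one-sided $\succeq$. One last remark applying equally to both proofs: the ``unit law'' $a+0\simeq a$ that you invoke is surely what the paper intends by its third axiom, but the axiom as printed reads $a+0\simeq 0+a$, which is redundant given commutativity; neutrality of $0$ is genuinely needed in both directions, so this is a typo in the paper's definition rather than a gap in your argument.
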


This result states that for a theory of resource convertibility that is quantity-like, a resource $a$ can be cloned if and only if it is actually worthless, i.e.~it can be produced from nothing. In other words, no non-trivial resource can be cloned. In particular, this means that each non-trivial resource is finite in character: if it was infinite, like the number of rooms available in Hilbert's hotel, then one could produce clones of it without any cost.

\begin{proof}
We have $a\succeq a$, hence $0\succeq a$ trivially implies that $a+0\succeq a+a$. Conversely, since $R$ is quantity-like, $a+0\succeq a+a$ and $a\succeq a$ imply that $0\succeq a$.
\end{proof}

\begin{example}
\label{ex:math}
If we consider the resource theory wherein mathematical propositions are the objects and proofs are the morphism (described in the introduction) with \em linear logic \em \cite{Girard} in the background, then 
resource objects also cannot be cloned. For this reason, linear logic is often referred to as a ``resource sensitive logic'', or in Girard's own words:  ``While classical logic is about truth, linear logic is about food''.
\end{example}

The opposite extreme to being quantity-like is this:

\begin{defn}
$R$ is \em quality-like \em if $a+a\simeq a$ for all $a\in R$.
\end{defn}

For example, the resource theory of proficiency (Example~\ref{ex:know}) is quality-like, while the resource theory of food (Example~\ref{ex:food}) is not. Being quality-like means that the number of copies that one has available of a resource is irrelevant.

\begin{prop}
If $R$ is quality-like, then the following conditions are equivalent for any $a,b\in R$:
\begin{enumerate}
\item $a+a \succeq b$,
\item $a\succeq b$,
\item $a \succeq b+b$.
\end{enumerate}
Conversely, if these conditions are equivalent, then $R$ is quality-like.
\end{prop}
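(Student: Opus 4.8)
The plan is to prove the forward direction by observing that the quality-like hypothesis turns all three conditions into mere restatements of the single relation $a \succeq b$, and to prove the converse by specializing $b = a$ and invoking reflexivity.

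First I would record the elementary fact that $\simeq$-equivalent resources are interchangeable on either side of $\succeq$: if $x \simeq x'$ and $x \succeq y$, then transitivity gives $x' \succeq x \succeq y$, hence $x' \succeq y$, and symmetrically for the right-hand argument. Under the quality-like assumption the relations $a + a \simeq a$ and $b + b \simeq b$ hold for the particular $a,b$ at hand. Consequently condition (1), namely $a+a \succeq b$, is equivalent to $a \succeq b$ (replace $a+a$ by $a$), and condition (3), namely $a \succeq b+b$, is equivalent to $a \succeq b$ (replace $b+b$ by $b$); condition (2) is $a \succeq b$ itself. This establishes that the three conditions coincide. If one prefers an explicit cycle of implications, I would instead argue $(1)\Rightarrow(2)$ using $a \succeq a+a$, then $(2)\Rightarrow(3)$ using $b \succeq b+b$, and finally $(3)\Rightarrow(1)$ using $b+b \succeq b$ followed by $a+a \succeq a$, each step being a single application of transitivity.

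For the converse, I would assume the three conditions are equivalent for all $a,b$ and specialize to $b = a$. Then condition (2) becomes $a \succeq a$, which holds by reflexivity of the preorder. Since the conditions are equivalent, (2) forces both (1) and (3), yielding $a + a \succeq a$ and $a \succeq a + a$, that is $a + a \simeq a$. As $a$ was arbitrary, $R$ is quality-like.

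I do not anticipate any genuine obstacle: the whole argument rests on transitivity and reflexivity together with the defining $\simeq$-relations, and makes no use of the compatibility axiom~\eqref{eq:posetbefunct} or of the associativity and commutativity of $+$. The only point demanding slight care is keeping track of the \emph{direction} in which each $\simeq$ is applied—one must invoke $a \succeq a+a$ in some steps and $a+a \succeq a$ in others—but this is bookkeeping rather than a difficulty.
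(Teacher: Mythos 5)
Your proof is correct, and it is essentially the argument the paper has in mind: the paper itself omits the proof entirely, remarking only that ``the proof is straightforward,'' and your argument---transitivity through the quality-like relations $a+a\simeq a$, $b+b\simeq b$ for the forward direction, and specialization to $b=a$ with reflexivity for the converse---is exactly that straightforward verification. No gaps; your observation that neither the compatibility axiom nor associativity/commutativity of $+$ is needed is also accurate.
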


The proof is straightforward.

In Proposition~\ref{prop:noclone}, we have encountered the condition $0\succeq a$, i.e.~that $a$ can be produced from nothing. Dually, we can ask whether $a$ can be turned into nothing:

\begin{defn}
A resource $a\in R$ is \em freely disposable \em if $a\succeq 0$.
\end{defn}

\begin{rem}
\label{rem:junk}
In many real-world examples, disposing a resource does itself come at a cost, and the given resource can be thought of as having a negative value; think e.g.~of nuclear waste, the disposal of which requires a sizeable investment of other resources, such as treatment, safe storage, and decay time. In order to capture this sort of phenomenon, the general theory of resource theories should not assume that all resources are freely disposable.
\end{rem}

\begin{defn}
If every $a\in R$ is freely disposable, then we say that $R$ is \em waste-free\em.
\end{defn}

One should keep in mind that this is a mathematical definition which may be satisfied for a given resource theory even when its resource objects can not be safely disposed of in practice. 

Being waste-free is equivalent to $0\in R$ being a bottom element of the preorder $\succeq$. In other words, it means that if $0\succeq a$, then $a\simeq 0$. If this is the case, then it follows that for all $a,b\in R$,
\beq
\label{eq:nonneg}
a + b \succeq a,
\eeq
which means that the operation $\blank + b: R\to R$ is order-nondecreasing for any $b$. We may think of this property as saying that, as a resource, the whole is always greater than or equal to each of its parts. If $R$ is not waste-free, then this is not necessarily the case.

\begin{example}
If $(\C,\Cf)$ is a theory with free processes such that $\C(I,I)=\{1_I\}$ and such that for every object $A\in\C$, there exists a process $\psi_A:I\to A$ and a process $\phi_A:A\to I$, then both resource theories of processes $\RC(\C,\Cf)$ and $\UC(\C,\Cf)$ are waste-free. This is easy to see by plugging the open ends of an undesired process $f:A\to B$,
\[
\includegraphics{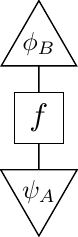}
\]
Since this is a process of type $I\to I$, it must necessarily be $1_I$, which is exactly the trivial resource $0$.
\end{example}

Finally, we end this section with one last property that a theory of resource convertibility may or may not have and that may be of interest:

\begin{defn}
$R$ has the \em Riesz interpolation property \em  if $(a_i)_{i\in I}$ and $(b_j)_{j\in J}$ are finite families of resources with $a_i\succeq b_j$ for all $i$ and $j$, then there exists $c\in R$ such that
\[
a_i\succeq c\succeq b_j.
\]
\end{defn}

This is a well-known property for partially ordered sets. In our context of resource theories, we think of $c$ as a ``proxy resource'' for turning an $a_i$ into a $b_j$.

\section{Quantitative concepts for theories of resource convertibility} 

\subsection{Monotones}

There is a long tradition in science of ``measuring'' or ``quantifying'' the utility of a resource object in terms of a number assigned to it. In our approach, this takes the following form:

\begin{defn}
A \em monotone \em on $R$ is a function $M:R\to\mathbb{R}$ such that
\[
a\succeq b \quad\Longrightarrow\quad M(a) \geq M(b).
\]
\end{defn}

The main use of a monotone lies in detecting the non-convertibility of a resource $a$ into a resource $b$: if $M(a) < M(b)$, then $a\not\succeq b$. The converse is in general not true, since $\succeq$ is only a preorder, and therefore may not be totally ordered. In this case, it is not possible to capture the ``value'' of a resource in terms of a single number. However, a \em family \em of monotones $(M_i)_{i\in I}$ may be \em complete \em in the sense that the monotones in this family characterize the order completely,
\[
a\succeq b \quad\Longleftrightarrow\quad M_i(a)\geq M_i(b) \:\forall i\in I.
\]

\begin{prop}
For any $R$ there exists a complete family of monotones.
\end{prop}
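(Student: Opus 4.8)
The plan is to produce, for each resource $c\in R$, a single $\{0,1\}$-valued monotone, and then show that the family indexed by $R$ itself is complete. The natural candidate is the indicator function of the principal up-set of $c$:
\[
M_c(a) := \begin{cases} 1 & \text{if } a \succeq c, \\ 0 & \text{otherwise.} \end{cases}
\]
Since $\{0,1\}\subset\mathbb{R}$, each $M_c$ is a legitimate real-valued function on $R$, and the index set $I:=R$ is a set by hypothesis, so $(M_c)_{c\in R}$ is a genuine set-indexed family.

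First I would verify that each $M_c$ is a monotone, i.e.\ that $a\succeq b$ implies $M_c(a)\geq M_c(b)$. The only nontrivial case is $M_c(b)=1$, which means $b\succeq c$; combined with $a\succeq b$, transitivity of $\succeq$ yields $a\succeq c$, so $M_c(a)=1$ and the inequality holds. When $M_c(b)=0$ the inequality $M_c(a)\geq 0$ is automatic, since $M_c$ takes values in $\{0,1\}$.

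Next I would establish completeness, namely $a\succeq b \Leftrightarrow M_c(a)\geq M_c(b)$ for all $c\in R$. The forward implication is immediate from the fact just proved that every $M_c$ is a monotone. For the converse I would exploit reflexivity by specializing to the index $c=b$: reflexivity gives $b\succeq b$, hence $M_b(b)=1$, and the hypothesis $M_b(a)\geq M_b(b)$ then forces $M_b(a)=1$, which by definition of $M_b$ means $a\succeq b$. This closes the argument.

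There is no genuine obstacle here; the construction is essentially forced once one makes two choices correctly. The only real decisions are to take the index set to be $R$ itself and to use \emph{up}-sets rather than down-sets, so that monotonicity points in the direction demanded by the definition while the specialization $c=b$ recovers comparability directly from reflexivity. It is worth noting that the proof uses neither the monoid operation $+$ nor any order-theoretic structure beyond reflexivity and transitivity, so the result holds for an arbitrary preorder; the trade-off is that these $\{0,1\}$-valued monotones are extremely coarse, and the construction makes no attempt to furnish monotones that are informative, additive, or continuous in any concrete resource theory.
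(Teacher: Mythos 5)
Your proof is correct and is essentially identical to the paper's: both take the index set to be $R$ itself and use the indicator functions of principal up-sets, $M_c(a)=1$ iff $a\succeq c$, with monotonicity from transitivity and completeness by specializing to the index $c=b$ (the paper phrases this step contrapositively, but it is the same argument).
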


The proof is very simple.

\begin{proof}
We take the index set of the family to be $I:=R$ itself. Then for any $i\in I$, define the monotone $M_i:R\to\mathbb{R}$ via
\[
M_i(a) := \begin{cases} 1 & \textrm{if } a\succeq i\\
                        0 & \textrm{if } a\not\succeq i. \end{cases}
\]
This is a monotone thanks to transitivity of $\succeq$. To show completeness, we start with a pair $a,b\in R$ with $a\not\succeq b$ and find a monotone in the family which witnesses this. And indeed, $M_b$ does the job, since $M_b(b)=1$, but $M_b(a)=0$. 
\end{proof}

Intuitively, we have simply defined, for every resource, the monotone that describes whether or not any other resource is convertible to it or not.  Given this monotone for every resource, one can obviously deduce whether any given conversion is possible or not, hence the set of monotones is complete. In terms of enriched category theory, one can understand the monotone $M_i$ to be a hom-functor, and the proposition is a special case of the enriched Yoneda lemma.

What this result demonstrates is that the only interesting question is whether one can find a complete set of monotones with a number of elements less than the number of elements in $R$, or in the case where there are infinitely many resources, whether one can find a complete set of monotones that can be parameterized with fewer parameters than is required to parameterize $R$.

One may think of a monotone as the assignment of a price to each resource in a way which is consistent with the convertibility relation. In this sense, one can also think of the letter ``$M$'' as standing for a consistent \em market \em in which the resource objects of $R$ are being traded.

The definition implies immediately that if $a\simeq b$, then $M(a)=M(b)$.

We can distinguish certain kinds of monotones having special properties, 

\begin{defn}
A monotone $M$ is
\begin{enumerate}
\item 
\em additive \em if
\[
M(a+b) = M(a) + M(b).
\]
\item \em supremal \em if
\[
M(a+b) = \max\{M(a),M(b)\}.
\]
\end{enumerate}
\end{defn}

The first condition implies in particular that $M(0)=0$. For the second condition, one may also assume $M(0)=0$ without loss of generality: defining $M'(a):=M(a)-M(0)$ yields a new supremal monotone satisfying $M'(0)=0$ while witnessing the same impossible conversions $a\not\succeq b$ as the original $M$.

If one regards $\mathbb{R}$ itself as a theory of resource convertibility, where the ordering is the usual one and addition is also the usual one, then an additive monotone is simply a homomorphism of theories of resource convertibility $M:R\to\mathbb{R}$. If one takes $\max:\mathbb{R}\times\mathbb{R}\to\mathbb{R}$ as the binary operation instead, then the definition of a supremal monotone (with $M(0)=0$) can be neatly summarised by saying that it should be a homomorphism of theories of resource convertibility $M:R\to\mathbb{R}$.

We suspect that additive monotones are most useful in resource theories that have a quantitative flavour, and in particular whenever $R$ is quantity-like. In contrast, a supremal monotone should behave more like a measure of quality and should be especially relevant when $R$ is quality-like.

\begin{example}
In the resource theory of food (Example~\ref{ex:food}), an additive monotone is determined by assigning a price $M((1,0))$ to an apple and a price $M((0,1))$ to a banana. Since $(1,0)\succeq 0$ and $(0,1)\succeq 0$, both of these prices have to be non-negative. By additivity, we necessarily must have
\[
M((a,b)) = a\cdot M((1,0)) + b\cdot M((0,1)),
\]
so that $M$ simply determines the total price of a resource object by tagging each apple and each banana with its price and adding the numbers up. Similarly, a supremal monotone is also determined by the price for an apple and the price for a banana; the price assigned to a collection of food given by a resource object $(a,b)$ is then given by the highest price of any of its constituents. 
\end{example}

\subsection{Conversion rates}

If $a\succeq b$ in a theory of resource convertibility, then certainly $a+a\succeq b+b$ as well. Another important phenomenon that may occur in resource theories is that the converse is not necessarily true: it may be the case that $a+a\succeq b+b$, although $a\not\succeq b$. In other words, it may be the case that one may be able to produce two copies from $b$ from two copies of $a$, although it is not possible to transform $a$ into $b$ at the single-copy level. 
 It is one of several ``economy of scale'' effects which makes mass production more efficient than small-scale production. See~\cite{BRS} for concrete examples of this activation phenomenon in the resource theory of bipartite entanglement (Example~\ref{ex:bientangle}). It is not limited to comparing the one-copy level with the two-copy level; we can compare any two distinct numbers of copies.  Introducing the notation $n\cdot a:= \underbrace{a + \ldots + a}_n$ for the resource corresponding to $n$ copies of $a$, it may happen that $k\cdot a\not\succeq k\cdot b$, although $n\cdot a\succeq n\cdot b$ for some suitable $n>k$.
However, in resource theories that are quality-like, this kind of behaviour is obviously not possible.

In general, the following is considered a particularly important question in a resource theory: in a setting in which one can turn many copies of $a$ into many copies of $b$, how many copies of $a$ does one need on average in order to produce one copy of $b$? In effect, this question is about the \em rate \em of conversion of $a$ to $b$:

\begin{defn}\label{def:confrate}\em
For two resources $a,b\in R$ of a theory of resource convertibility, the \em conversion rate \em from $a$ to $b$, or simply \em rate \em from $a$ to $b$, is:
\beq
\label{eq:defnrate}
R(a\to b):= 
 \sup\left\{  \frac{m}{n}\Bigm| n\cdot a \succeq m\cdot b,\:  n, m \in\mathbb{N}_+\right\} .
\eeq
\end{defn} 

If $n\cdot a\not\succeq m\cdot b$ for all positive integers $n$ and $m$, then we define $R(a\to b):=0$; in all other cases, the rate is a strictly positive real number or $\infty$. This definition concerns the \em maximal \em rate; if we take the infimum in~\eqref{eq:defnrate} as opposed to the supremum, we obtain the analogous definition of \em minimal \em rate. Which one of these two one would like to attain depends significantly on the context: if $b$ is a resource object that one desires to produce, the maximal rate~\eqref{eq:defnrate} is the relevant quantity; if $a$ is a ``negative resource'' that one would like to get rid of (Remark~\ref{rem:junk}) by converting it into copies of a less problematic object $b$, then the minimal rate will be the appropriate figure to look at.

Extensive monotones yield upper bounds on the maximal rate:

\begin{thm}\label{thm:rate}
Let $M$ be an extensive monotone. Then
\beq\label{eq:ratemu}
R(a\to b) \cdot M(b) \leq M(a).
\eeq
\end{thm}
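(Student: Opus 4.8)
The plan is to decategorify the single-shot monotonicity of $M$ and then use extensivity to turn it into a bound on every ratio $m/n$ that appears in the supremum defining $R(a\to b)$. Concretely, I would first record the consequence of extensivity that, by additivity together with a trivial induction starting from $M(0)=0$, one has $M(n\cdot a)=n\,M(a)$ and $M(m\cdot b)=m\,M(b)$ for all $n,m\in\mathbb{N}_+$.

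Next, I would take an arbitrary pair $n,m\in\mathbb{N}_+$ contributing to the supremum in~\eqref{eq:defnrate}, i.e.\ one satisfying $n\cdot a\succeq m\cdot b$. Applying the defining property of a monotone to this conversion gives $M(n\cdot a)\ge M(m\cdot b)$, and then extensivity rewrites this as $n\,M(a)\ge m\,M(b)$, that is, $\tfrac{m}{n}\,M(b)\le M(a)$. Thus every admissible ratio $m/n$ obeys this single inequality, and the entire content of the argument is already contained in this one line; everything else is bookkeeping.

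Finally, I would take the supremum over all admissible pairs. Assuming the monotone takes nonnegative values (as befits its interpretation as a price), when $M(b)>0$ the inequality rearranges to $\tfrac{m}{n}\le M(a)/M(b)$ for each admissible pair, so the supremum satisfies $R(a\to b)\le M(a)/M(b)$, which is exactly~\eqref{eq:ratemu}. The step I expect to need the most care is not the core estimate but the handling of the boundary cases: when $M(b)=0$ the bound forces $M(a)\ge 0$ and the claim reduces to $0\le M(a)$, while the convention $R(a\to b)=0$ (no conversion $n\cdot a\succeq m\cdot b$ exists for any $n,m$) likewise reduces the claim to $0\le M(a)$, which holds by nonnegativity of $M$. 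One must also verify that multiplying $\tfrac{m}{n}\le M(a)/M(b)$ through by the nonnegative quantity $M(b)$ and then passing to the supremum is legitimate, i.e.\ that $\big(\sup \tfrac{m}{n}\big)\,M(b)=\sup\big(\tfrac{m}{n}\,M(b)\big)$; this is immediate for $M(b)\ge 0$ but is the one spot where the sign of $M(b)$ genuinely matters.
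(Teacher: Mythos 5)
Your proof is correct and follows essentially the same route as the paper's: monotonicity applied to $n\cdot a \succeq m\cdot b$ together with extensivity (i.e.\ $M(n\cdot a)=n\,M(a)$) yields $\tfrac{m}{n}\,M(b)\leq M(a)$ for every admissible pair, and the claim follows by passing to the supremum in~\eqref{eq:defnrate}. The extra care you devote to the boundary cases (the sign of $M(b)$, and the convention $R(a\to b)=0$ when no conversion exists) goes beyond the paper's one-line argument, which silently interchanges the supremum with multiplication by $M(b)$, so your version is if anything slightly more complete.
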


It is straightforward to derive the same result for the minimal rate, but with the inequality sign going the other way.

\begin{proof}
If $n\cdot a\succeq m\cdot b$, then we obtain by extensivity of $M$,
\[
n \cdot M(a) \geq m\cdot M(b).
\]
Rearranging to
\[
\frac{m}{n} \cdot M(b) \leq M(a)
\]
proves the claim upon taking the supremum corresponding to~\eqref{eq:defnrate}.
\end{proof}

If $M(b)>0$, then one can rearrange~\eqref{eq:ratemu} to the more natural-looking inequality
\[
R(a\to b)\leq \frac{M(a)}{M(b)}.
\]
If one thinks of $M$ as a consistent market which assigns to each resource object its price, then this makes good sense: the number of $b$'s that one can obtain on average from one unit of $a$ cannot be higher than the value of one unit of $a$ relative to one unit of $b$.

\section{Closing}\label{closing}

We have defined resource theories as symmetric monoidal categories, the objects of which are resources which the morphisms transform into each other. We have explained several ways in which one can obtain a resource theory from a category of processes equipped with a distinguished subcategory of ``free'' processes. Finally, we have noted that the questions that one typically asks about a resource theory---concerning catalysis, rates of conversion, etc.---only concern the question of \em whether \em a transformation of a certain type exists or not. This led to the definition of a theory of resource convertibility, in which the structure of a category is ``decategorified'' to the structure of a commutative preordered monoid. Our first small results are nothing but the very beginnings of a detailed investigation of theories of resource convertibility and their mathematical structure. Our long list of examples suggests that there will be a strong interplay between the abstract and general theory and the phenomenology of concrete resource theories of interest.

A major problem that we have not yet touched upon is \em epsilonification\em. The idea here is that in many applications, such as in the resource theory of communication (Example~\ref{ex:shannon}), it may be sufficient to turn a given resource $a$ into some $b'$ \em close to \em a target resource $b$, i.e.~turning $a$ into $b$ ``up to $\varepsilon$''. Typically, one would like the $\varepsilon$ to become arbitrarily small, possibly as the number of copies of $a$ and $b$ increases. We are currently investigating how our definitions should be modified in order to be able to deal with this kind of question as well. One way to do this to define an \emph{epsilonified theory of resource convertibility} just as in Definition~\ref{defn:mereres}, but replacing the preorder by a \emph{cost function}, which is a function assigning to every two $a,b\in R$ a number $c(a,b)\in \mathbb{R}_{\geq 0}$ which measures how close one to get to $b$ by consuming $a$, and such that suitable axioms hold which are analogous to the ones of Definition~\ref{defn:mereres}.

\appendix

\section{Proof of Theorem~\ref{thm:proctorec}}\label{appendixA}

\begin{proof}
What we really need to show is that $\RC(\C,\Cf)$ is an SMC.

Associativity of sequential  and parallel composition are obvious in the diagrammatic notation. The identity on $f$ is the comb $(I, 1_{{\rm dom}(f)}, 1_{{\rm codom}(f)})$, and the tensor unit is $1_I$.  

Using bifunctoriality of the tensor product in $\C$, we show bifunctoriality of the tensor product in $\RC(\C,\Cf)$:
\[
\includegraphics{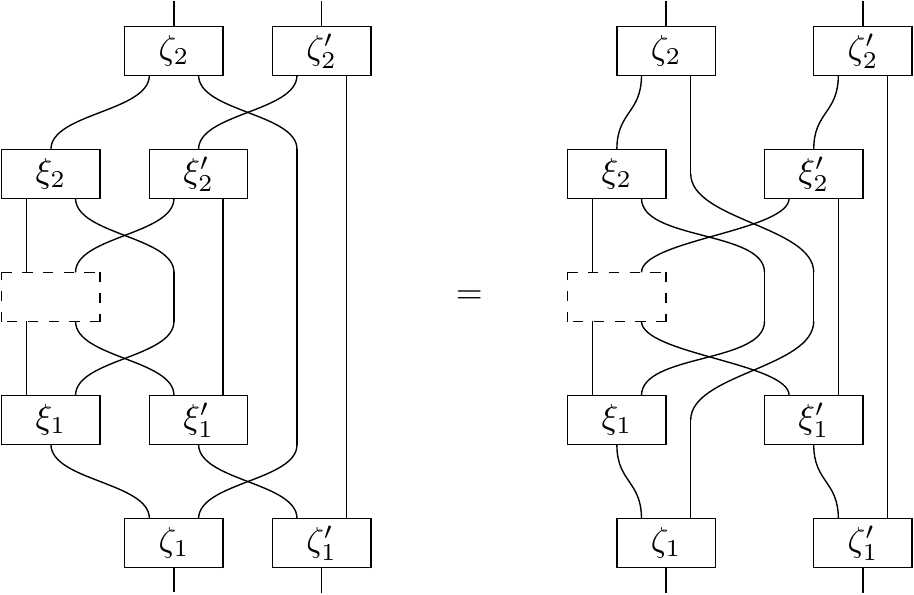}
\]
Here, the left-hand side defines the comb obtained by composing first in parallel and then sequentially,
\[
\left((Z_\zeta, \zeta_1, \zeta_2)\otimes (Z'_\zeta, \zeta_1', \zeta_2')\right)\circ\left((Z_\xi, \xi_1, \xi_2)\otimes (Z'_\xi, \xi_1', \xi_2')\right)\ ,
\]
while the right-hand is the comb coming from composing first sequentially and then in parallel,
\[
\left((Z_\zeta, \zeta_1, \zeta_2)\circ (Z_\xi, \xi_1, \xi_2)\right)\otimes\left((Z'_\zeta, \zeta_1', \zeta_2')\circ (Z'_\xi, \xi_1', \xi_2')\right)\ .
\]
The equality of the two composite combs follows from the invariance properties of the graphical calculus for SMCs.

Finally, the symmetry natural isomorphism on the object $f\otimes g$ is the comb
\[
\includegraphics{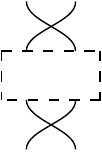}
\]
which corresponds to the triple $(I,\sigma_1,\sigma_2)$, where
\[
\sigma_1:{\rm dom}(g)\otimes {\rm dom}(f) \lra {\rm dom}(f) \otimes {\rm dom}(g),\qquad \sigma_2:{\rm codom}(f)\otimes {\rm codom}(g) \lra {\rm codom}(g) \otimes {\rm codom}(f)
\]
are the symmetry isomorphisms. These are indeed free processes because $\Cf$ is a sub-SMC of $\C$. Naturality of the symmetry follows from the equality
\[
\includegraphics{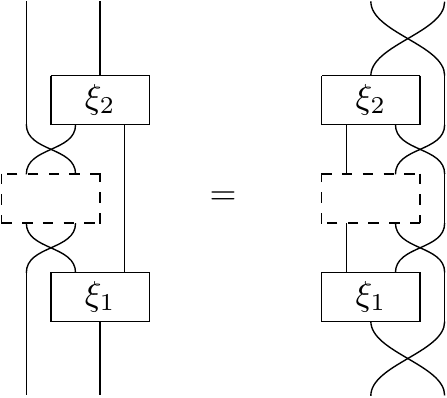}
\]
Here, the left-hand side depicts an application of the symmetry before the comb transformation, while the right-hand side applies the comb transformation before the symmetry. Equality of these two diagrams follows again from the graphical calculus.
\end{proof}

\bibliographystyle{plain}
\bibliography{Resource}

\end{document}